\newtheorem{Satz}{Theorem}[section]
\newtheorem{Prop}[Satz]{Proposition}
\begin{document}

\title{Meta-universality classes at criticality}

\author{Duncan A.J.~Blythe\footnote{Berlin Institute of Technology, Berlin, Germany, Bernstein Centre for Computational Neuroscience, Berlin, Germany, African Institute of Mathematical Sciences, Bagamoyo, Tanzania}, Vadim V.~Nikulin\footnote{Charit{\'e} Medical University, Berlin, Germany, Bernstein Centre for Computational Neuroscience, Berlin, Germany, National Research University, Higher School of Economics, Moscow, Russia}}

\maketitle

\begin{abstract}
We show theoretically that the hypothesis of criticality as a theory of long-range fluctuation in the human brain may be distinguished from competing explanations on the basis of macroscopic neuronal 
signals such as the EEG, using novel theory of narrowband amplitude time-series at criticality. Our theory predicts the division of critical activity into \emph{meta-universality classes}.
As a consequence we provide strong evidence for criticality in the human brain on the basis of the EEG.
\end{abstract}

Testing for the presence of \emph{critical power-law avalanche dynamics} (CPLAD) is pivotal in establishing that a dynamical system is poised close to a critical point.
Such an analysis assumes that periods of activity are separated from periods of inactivity. For the intact brain of awake human subjects, which has been hypothesized to be critical \cite{beggs2008criticality}, the assumption of separation between activity and inactivity is never met, since the system is continuously active. This lack of separation makes unclear whether the results of existing experiments confirming criticality in animals \emph{in vitro} \cite{beggs2003neuronal,friedman2012universal} and \emph{in vivo} \cite{petermann2009spontaneous, ribeiro2010spike} generalize to the intact brain of awake human subjects; in particular it has been argued that the scale-free form of the power-spectra of LFP, EEG and MEG data is due not to CPLAD but to \emph{passive filtering} (PF) through the extracellular media \cite{bedard2006does,bedard2009macroscopic}. While several papers have claimed to test the hypothesis of criticality from large scale-brain signals of awake human subjects \cite{shriki2013neuronal} the authors only find CPLAD over short time-scales ($<1$ second) since the continuous nature of the data prevented testing of criticality as a theory of neuronal fluctuations over longer time-scales.
Quantifying potential CPLAD from such non-invasive human recordings is 
challenging since a superposition of overlapping avalanches may lead to incorrect estimation
of size and duration (critical) exponents (this is related to the under sampling problem \cite{ribeiro2010spike}) or indeed of whether
CPLAD are present at all. 
Interestingly, in this letter we show that in order to quantify and distinguish CPLAD from PF, it is not necessary to define individual avalanches; instead we show that is is sufficient to analyse the properties of continuous neuronal recordings. This approach allows us to test the hypothesis of criticality as a theory of neuronal variability over large time-scales. 

%
We use the facts that periods of activity of local neural networks are separated by periods of inactivity, and macroscopic brain signals, such as the LFP, EEG and MEG, measure the linear superposition $X'(t)$ of activity of numerous local neural networks \cite{griffiths1999introduction}; this means that the macroscopic brain signal is continuous, although the activity of local networks is not. Thus:
\begin{align}
\label{eq:1}
X'(t) = \sum_{s=1}^T \sum_{i=1}^{q_s} h_{s,i} a_{s,i} \left( \frac{t-s}{L_{s,i}}\right)
\end{align}
$a_{s,i} \left( \frac{t-s}{L_{s,i}}\right)$ denotes activity at time $t$ of the $i^\text{th}$ local neural network $L_{s,i}$
which begins at time $s$, after a period of inactivity and lasts for $L_{s,i}$ time steps. We adopt the conventions that $a_{s,i}(t)$ is normalized to unit standard deviation and is $0$
for $t \in (-\infty,0) \cup (1,\infty)$. Thus, $h_{s,i}$ denotes the average height of the time-course of activity of the $i^\text{th}$ local neural network which begins at time $s$.

PF and CPLAD differ on how $h_{s,i}$ and $L_{s,i}$ are distributed. According to the PF hypothesis their distributions decay faster than power-laws. For simplicity, therefore, we summarize the PF as claiming exponential distributions:
$p(L) \sim e^{-AL }$ and
$p(h) \sim e^{-Bh}$.
Moreover the PF hypothesis states power-spectra of macroscopic brain signals $X(t)$ are scale-free because they reflect a filtered version of $X'(t)$:
\begin{align}
\label{eq:2}
X(t) = \sum_{u=0}^\infty F(u) X'(t-u)
\end{align}
$F(u)$ is a linear filter due to the extracellular media which yields a signal with power-law power spectrum from white noise input \cite{bedard2006does,bedard2009macroscopic}.

On the other hand, the CPLAD hypothesis states that macroscopic signals $X(t)$ reflect $X'(t)$ directly so that $X(t) = X'(t)$ and that we have power law distributions for $h_{s,i}$ and $L_{s,i}$:
$L \sim L^{-\alpha} \nonumber$ and
$h \sim L^\beta$.
These power-law distributions explain the power-law form of the power-spectrum \cite{jensen19891}. In addition the CPLAD hypothesis states that each $a_{s,i}(t)$ is an independent and identical sample from a single stochastic process, which we call $a(t)$ \cite{mehta2002universal}.

We now present theory which makes predictions for the CPLAD hypothesis which provably do not hold for the PF hypothesis. As well as considering the raw signal X(t) we also consider the amplitude of a narrowband filtered version of $X(t)$, which we denote $g_\omega(X (t))$. Thus let $f_\omega(\cdot)$ be a linear narrowband filter with pass band 
$\omega \in [\omega - \Delta \omega],  \omega+ \Delta \omega]$, and $\mathcal{H}(\cdot)$ the Hilbert transform then:
\begin{align*}
g_\omega(X (t)) = |f_\omega(X (t)) + i\mathcal{H}(f_\omega(X (t)))| 
\end{align*}

The theory depends on two measures of long-term variation evaluated on $X(t)$ and $g_\omega(X (t))$: the Hurst exponent and the detrended cross correlation coefficient. The Hurst exponent $H$, of a stationary process $Y(t)$ may be defined by the scaling of the auto covariance function. For $H \in (0, 1)$, $Y(t)$ is said to be long-range temporally correlated (LRTC):
\begin{align*}
\mathbb{E}(Y(t + s)Y(t)) - \mathbb{E}(Y (t))^2 \sim  s^{2H-2}
\end{align*}
For auto covariances decaying faster than $s^{-1}$, one defines $H = 1/2$ and $Y(t)$ is not LRTC. From now on we distinguish the Hurst exponents of $X(t)$ and $g_\omega(X(t))$ as $H_{raw}$ and $H^\omega_{amp}$.
Typically in neuroscientific applications Hurst exponents are measured with Detrended Fluctuation Analysis (DFA) \cite{peng1994mosaic}.

The detrended cross correlation coefficient \cite{zebende2011study}
$\rho_{DCCA}(n, Y_1, Y_2)$ is a measure of correlation between two time-series $Y_1(t)$ and $Y_2(t)$ at a time-scale $n$,
which is invariant to non-stationary trends of a fixed
polynomial degree. Let $F^2_{Y_1,Y_1}(n)$ be the detrended variance of $Y_1$ where each window of size $n$ is detrended, as computed for DFA and $F^2_{Y_1,Y_2}(n)$, 
by analogy, the detrended covariance, as computed for DCCA \cite{podobnik2008detrended}; then in analogy to the Pearson correlation coefficient, one defines:
\begin{align*}
\rho_{DCCA}(n,Y_1,Y_2) = \frac{F^2_{Y_1,Y_2}(n)}{\sqrt{F^2_{Y_1,Y_1}(n) F^2_{Y_2,Y_2}(n)}}
\end{align*}
Our first result is that for the $X(t)$ of the PF hypothesis, Equation~\eqref{eq:2}, $H^\omega_{amp} = 0.5$ and, when $\omega_1 \neq \omega_2$, $\rho_{DCCA}(n, g_{\omega_1}(X(t)), g(\omega_2)(X(t))) \rightarrow 0$ as $n\rightarrow\infty$.
This can be seen by splitting Equation~\eqref{eq:1} into activity which lasts longer than some value $L'$ and activity which has duration shorter than or equal to $L'$:
\begin{align}
X'(t) = \sum_{L_{s,i}\leq L'} h_{s,i} a_{s,i}\left(\frac{t-s}{L_{s,i}}\right) + \sum_{L_{s,i}>L'} h_{s,i} a_{s,i} \left(\frac{t-s}{L_{s,i}} \right)
\end{align}
Since the distribution of L decays exponentially, the vari-
ance of the left hand term dominates. Therefore:
\begin{align*}
f_\omega(X'(t)) \sim \sum_{L_{s,i} \leq L'} h_{s,i} f_\omega\left(a_{s,i}\left(\frac{t-s}{L_{s,i}}\right)\right)
\end{align*}
Since time-points in this expression spaced more than
$L'$ points apart are independent we also have that
time-points of $g_\omega(X'(t))$ spaced more than $L'$ points
apart are independent, so that $H^\omega_{amp}= 1/2$ and 
$\rho_{DCCA}(n,\omega_1,\omega_2) \rightarrow \infty$ as $n\rightarrow \infty$ (same asymptotic properties as white noise); the reweighting of frequency induced by the passive filtering does not change the narrowband properties. See Proposition 2.1 of the supplement.

We now consider the CPLAD hypothesis. The behaviours we derive for $H^\omega_{amp}$ and
$\rho_{DCCA}(n,g_{\omega_1}(X(t)),g_{\omega_2}(X(t)))$  depend on the
exponents $\alpha$ and $\beta$. We find that there are four regions of parameters with qualitatively differing behaviours which we term  \emph{meta-universality classes}.
We assume that all power-law distributions are cut off at a lifetime $L_c$ which is proportional to the size of the system and, for simplicity, that at each time-point a fixed number $q_s = q$ of avalanches begin.

\noindent
({\bf MU1}) $\alpha < 2$

For $\alpha < 2$ the number of avalanches active at time $t$ is:
\begin{align*}
\sum_{s = 0}^{L_c} \#\{L_{t-s,i} | L_{t-s,i}>s\} &\sim \int_0^{L_c} q \left( \int_s^{L_c} L^{-\alpha} dL\right) \\
&\sim L_c^{2-\alpha}
\end{align*}
This implies the number of avalanches active at any given time is unbounded in the system size. Applying the Central Limit Theorem, this implies that $X(t)$ is a Gaussian process with power-law autocorrelation (see supplement for details). Since Gaussian processes are uniquely defined by their second order properties and may be generated by filtering white noise \cite{amblard2013basic}, then in analogy to the results for the PF hypothesis $H^\omega_{amp} = 1/2$ and $\rho_{DCCA}(n,\omega_1,\omega_2) \rightarrow 0$ as $n \rightarrow \infty$. See Proposition 2.4.

\noindent
({\bf MU2}) $\alpha >2$ and $\alpha< \beta+3$

For $\alpha > 2$, the probability that an avalanche is active with duration greater than $L'$ is:
\begin{align*}
\frac{1}{T} \sum_{L_{s,I}>L'} L_{s,i} \sim L'^{2-\alpha}
\end{align*}
Thus the probability that large avalanches occur simultaneously is negligible.
Moreover we have, for an exponent $\beta'$, a scaling relation for large $L$:
\begin{align*}
f_\omega(L^\beta a(t/L)) \sim L^{\beta'} f_\omega(a)(t/L) 
\end{align*}
Here $f_\omega(a)(t/L)$ is understood as position $t/L$ of $a(t)$ filtered in the narrowband around $\omega$. 
We are able to derive this exponent theoretically (Proposition 2.7) and find that $\beta' = \beta/2$.
This implies that for $\alpha<\beta+3$ (but not otherwise) the variance of the large narrowband filtered avalanches dominates so that:
\begin{align*}
f_\omega(X(t)) \sim \sum_{L_{s,i}>L'} L_{s,i}^{\beta/2} f_\omega(a) \left(\frac{t-s}{L_{s,i}}\right)
\end{align*}
Since the avalanches on the right hand side of this relation do not overlap then we have that:
\begin{align*}
g_\omega(X(t)) \sim \sum_{L_{s,i}>L'} L_{s,i}^{\beta/2} g_\omega(a) \left(\frac{t-s}{L_{s,i}}\right)
\end{align*}
Given this representation of the amplitude as a simple sum of amplitudes of individual avalanches, standard techniques may then be applied to approximating the Hurst exponent $H^\omega_{amp}$ \cite{jensen19891}, and we find:
\begin{align*}
H^\omega_{amp} &= \beta/2 -\alpha/2  + 2 \\
&> 1/2
\end{align*}
Similarly we find:
\begin{align*}
H^\omega_{raw} &= \beta -\alpha/2  + 2 \\
&> 1/2
\end{align*}
Moreover, assuming that the integrals $\int_0^1 g_\omega(a(t))dt$ exist, 
then the separation of large avalanches make it simple to derive that
$\rho_{DCCA}(n,\omega_1,\omega_2)\rightarrow 1$ as $n\rightarrow \infty$.
See Propositions 2.8 and 2.9.

\noindent
({\bf MU3}) $\alpha >2$ and $\alpha<2\beta+3$

For frequencies with $1/\omega \gg L_c$ and $L \leq L_c$:
\begin{align*}
f_\omega(L^\beta) \sim L^\beta f_\omega(a)(t/L)
\end{align*}
This is because relative to the time-scale of the filter, $a(t/L)$ may be treated as a delta function, which weights all frequencies equally. Therefore applying a similar argument as for {\bf MU2} we find that for $\omega_1$ , $\omega_2 \rightarrow 0$, 
$\rho_{DCCA}(n,\omega_1,\omega_2)\rightarrow 1$ as $n\rightarrow \infty$. Moreover, applying the results for {\bf MU2} we have:
\begin{align*}
H^\omega_{amp} &= 1/2 \\
H^\omega_{raw} &= \beta -\alpha/2  + 2 \\
&> 1/2
\end{align*}
See Proposition 2.10.

\noindent
({\bf MU4}) $\alpha > 2 \beta +3 $ and $\alpha > 2$

Since these universality classes have the shortest tails in
their critical distributions, we may apply identical methods as
applied to the PF hypothesis which show that $H^\omega_{amp} = 1/2$
 and $\rho_{DCCA}(n,\omega_1,\omega_2)\rightarrow 0$ as $n\rightarrow \infty$ when $\omega_1 \neq \omega_2$.
 See Proposition 2.11.

The results of the theory for the CPLAD model are summarized in Figure~\ref{fig:phase_space}.

\begin{figure}
\includegraphics[width = 0.8\columnwidth]{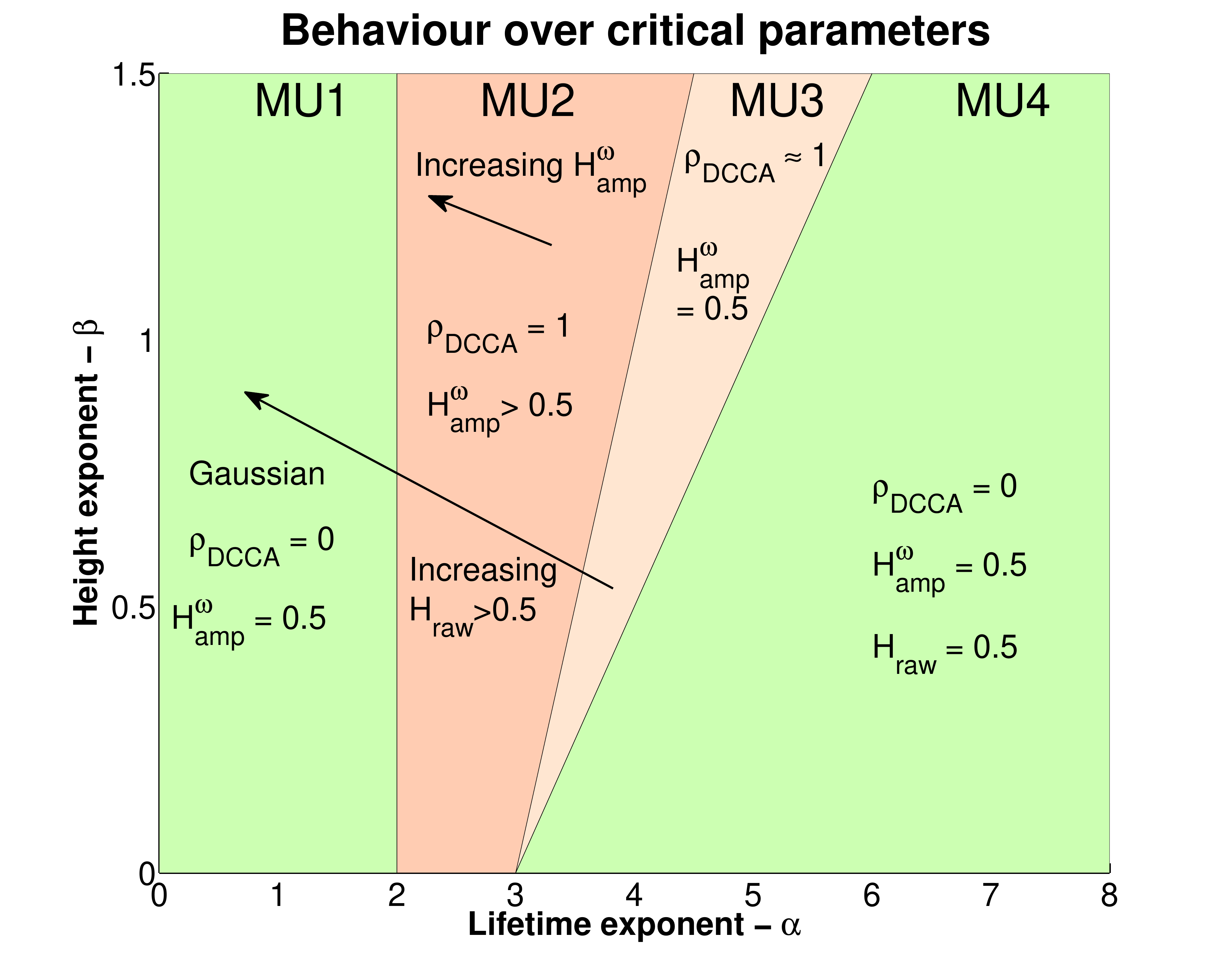}
\caption{Division of critical exponents $\alpha$ and $\beta$ into meta- universality classes. The figure displays the range of qualitative behaviours we predict with our theory. Areas marked in green display no LRTC behaviour in sub-bands or DCCA correlations between sub-bands. Areas in red display LRTC and/or cross correlations between amplitudes of sub-bands ($H^\omega_{amp} = 1/2$, $\rho_{DCCA}(n) = 0$ for large $n$).}
\label{fig:phase_space}
\end{figure} 

We now present tests of these predictions in simulations.
We first tested the predictions for the PF hypothesis by generating a LRTC process by linear filtering of white noise. 
The results are displayed in the bottom panel of Figure~\ref{fig:slow_example} and Figure~2 of the supplement.
The results confirm that $H^\omega_{amp} = 1/2$ and $\rho_{DCCA}(n,\omega_1,\omega_2)\rightarrow 0$.

\begin{figure}
\includegraphics[width = 0.8\columnwidth]{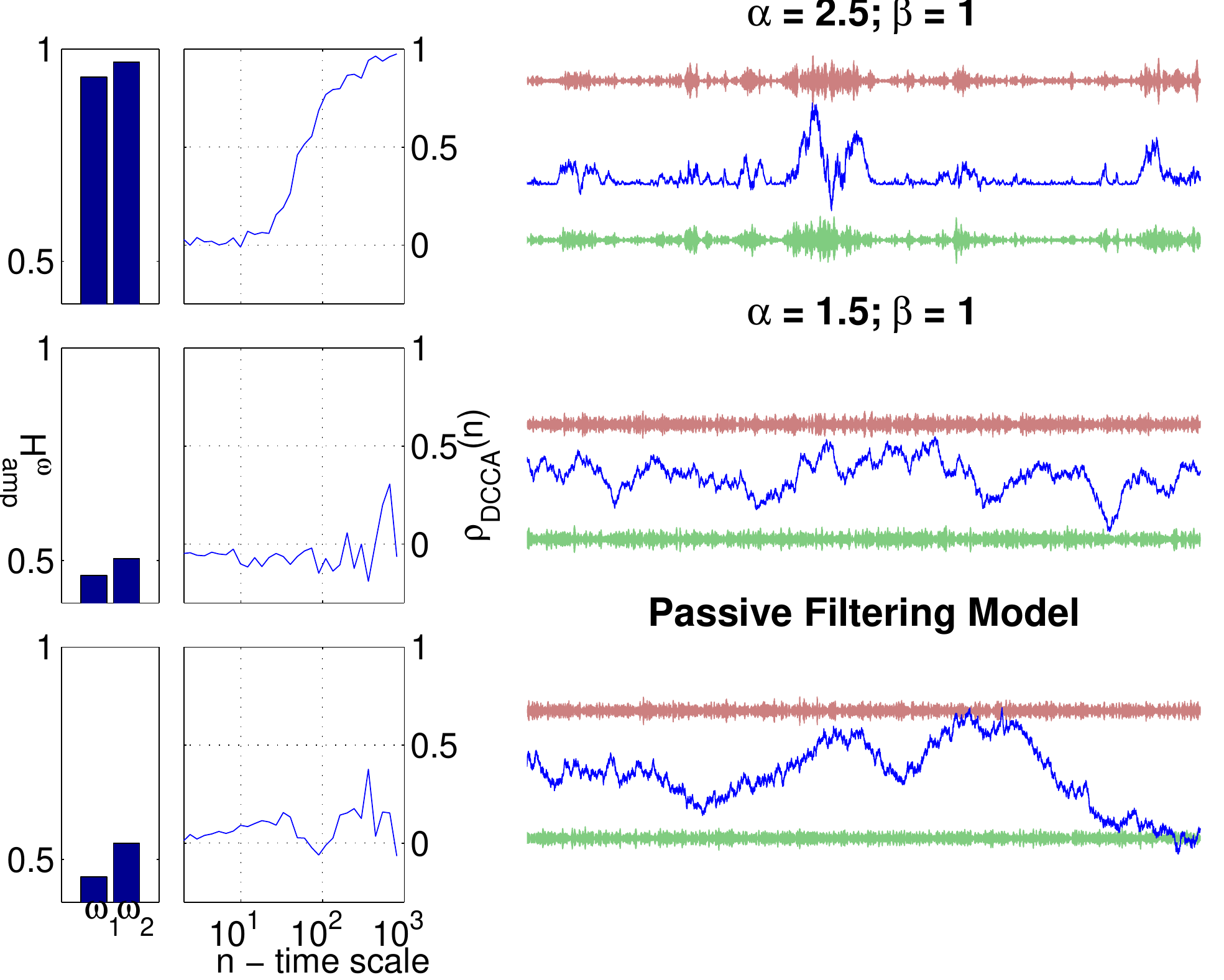}
\caption{Right: Sample paths from the PF and CPLAD models. 
In each of the three cases the $x$-axis denotes time 
and the $y$-axis number of activations.
In each case the middle trace denotes $X(t)$ and the top and bottom $g_{\omega_i}(X(t))$. 
Left: DCCA correlation coefficients $\rho_{DCCA}(n)$ and Hurst exponents $H^\omega_{amp}$.
}
\label{fig:slow_example}
\end{figure}

We then tested the predictions for the CPLAD hypothesis, modeling the activity of local networks by:
\begin{align*}
a(t) = b(t) + c(t)\epsilon(t)
\end{align*}
$b(t)$ is the average avalanche shape, $c(t)$ the shape in the variance profile and $\epsilon(t)$ a colored noise with the spectrum known for a critical system 
$\mathcal{P}[\omega] \sim \omega^{-\beta-1}$ \cite{kuntz2000noise} (see supplement for details). Two examples in {\bf MU2} and {\bf MU1} are displayed in the top two rows of Figure~\ref{fig:slow_example}.

The results of a simulation for all critical parameters are displayed in Figure~\ref{fig:cutoff_corr}. We find good agreement be- tween the meta-universality classes derived and the simulation results. (See supplement sec. 3.3 for details.)

\begin{figure}
\includegraphics[width=\columnwidth]{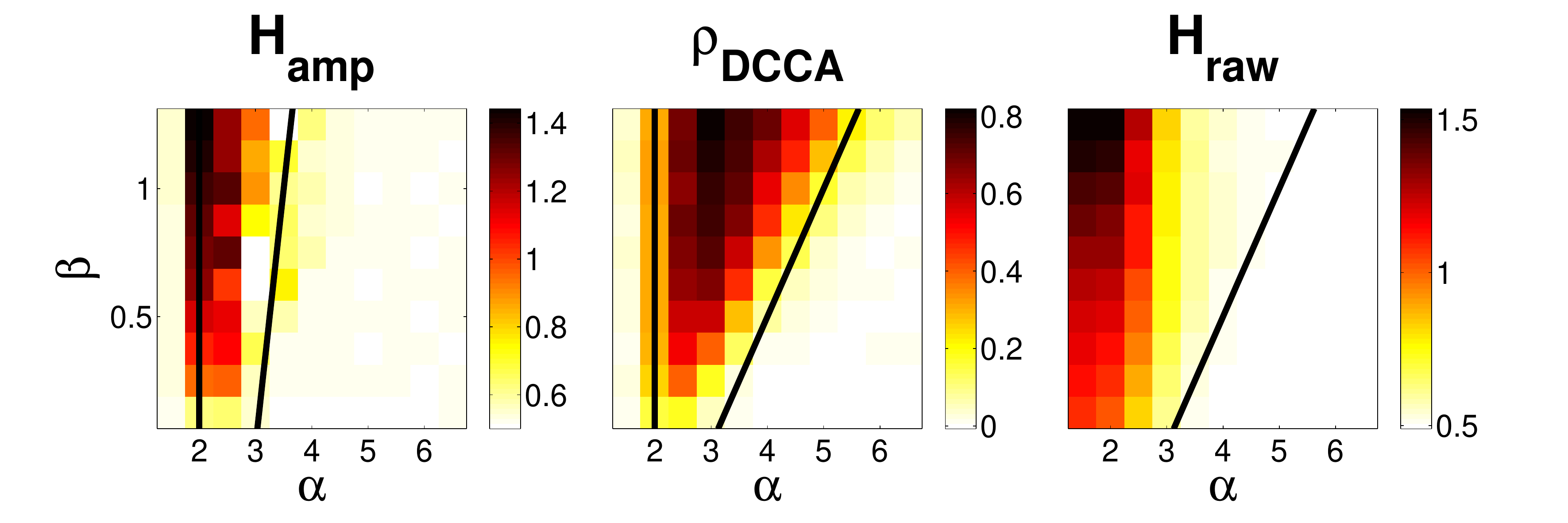}
\caption{DCCA correlation coefficients and Hurst exponent for the simulated CPLAD model.
The $x$-axis denotes the exponent $\alpha$ 
and the $y$-axis denotes $\beta$. 
The black lines denotes the transitions in meta-universality class according to the theory. 
\label{fig:cutoff_corr}
}
\end{figure}

Finally we tested the PF and CPLAD hypotheses by estimating $H^\omega_{amp}$, $H_{raw}$ and
$\rho_{DCCA}(n,\omega_1,\omega_2)$ on EEG time-series (see supplement for preprocessing).
Important is that we analyse 3 frequency ranges 
without oscillations (no local maximum in
power-spectrum); the aim was to restrict
analysis to activity corresponding to the $1/f^\gamma$ shape of
the power-spectra. Given that the data were sampled at
200Hz, and that lower frequencies require far larger
window sizes for analysis, we chose 3 frequencies above
the beta range, taking care to exclude the 50Hz line
noise.  

\begin{figure}
\includegraphics[width = 1\columnwidth]{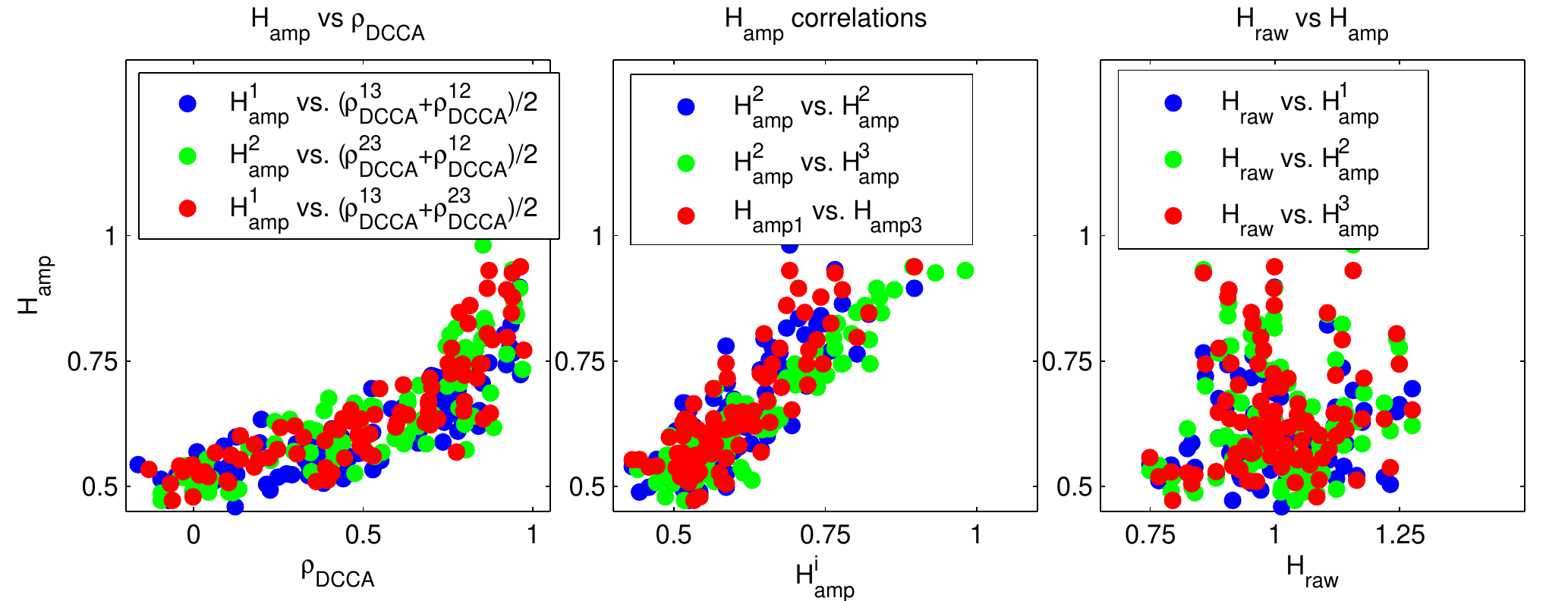}
\caption{Results of data analysis of human EEG. The frequency ranges analysed $i = 1,2,3$ are 35-40Hz, 60-65Hz and 72-77 Hz respectively, which are displayed as superscripts in the plots. }
\label{fig:real_data_prl}
\end{figure}

The results of our analysis over
all 7 subjects are displayed in Figure~\ref{fig:real_data_prl}. For 244 of 261
signals, the
DFA estimate of $H^\omega_{amp}$ was higher than $1/2$, clearly
indicating the presence of LRTC (median $H^\omega_{amp}$ = 0.61, 5th and 95th percentiles 0.49 and 0.85, p < 0.0001). Likewise, 238 of 261 DCCA correlation coefficients $\rho_{DCCA}(n,\omega_1,\omega_2)$ at the highest scale $n$ between frequency ranges were positive ($p \ll 0.0001$). 
We found moreover that the $\rho_{DCCA}$ values at the highest scale and $H_{amp}^\omega$ measured from the same neural data were highly correlated ($p \ll 0.0001$)
 and $H^\omega_{amp}$ values in distinct frequency ranges were highly correlated ($p\ll 0.0001$) (likewise for $\rho_DCCA(n)$). Finally we found that the $\rho_{DCCA}(n)$ and $H^\omega_{amp}$ values were not significantly correlated with $H_{raw}$ ($p > 0.05$). These results confirm the CPLAD hypothesis, in particular in agreement with predictions for 
 {\bf MU2}.
 
 Thus these findings provide strong evidence for criticality in the human brain by confirming the hypothesis of CPLAD. These results cannot be explained by the PF hypothesis; passive filtering of neural signals may generate a power-law spectrum but will not induce LRTC of narrowband amplitudes or DCCA correlations between narrowband amplitudes. Moreover, we note that the failure to detect avalanches in the experiments of \cite{bedard2006does,ribeiro2010spike} may be explained by criticality in the first meta-universality class where we have shown that even the largest avalanches are not discernible. Thus it is possible in these experiments that the systems under study were critical despite failure to demonstrate CPLAD directly.


We thank Daniel Bartz and Klaus-Robert M{\"u}ller for critical readings of the manuscript. Financial support; DAJB: DFG, GRK 1589/1 ÓSensory Computation in Neural SystemsÓ;
VVN partial support: BCCN, Berlin (BCCN 01GQ 1001C) and the National Research University, Higher School of Economics.

\bibliographystyle{ieeetr}
\bibliography{non_standard_fractional}

\begin{appendix}

\section{Estimators}
\label{sec:Estimators}

\subsection{Detrended Fluctuation Analysis}
\label{sec:Detrended Fluctuation Analysis}

Detrended Fluctuation Analysis (DFA) \cite{peng1994mosaic} is a methodology for the estimation of the Hurst exponent of a (possibly non-stationary) time-series.
Its advantage over covariance analysis or analysis of the power-spectrum are its robustness to trends contaminating the empirical time-series
and its desirable convergence properties \cite{weron2002estimating}.

The steps involved in DFA are as follows. First one forms the aggregate sum of the empirical 
 time-series $X(t)$:
 \begin{align*}
x(t) = \sum_{i=1}^t X(i)
 \end{align*}
 
(From now on whenever we refer to $x(t)$ we mean the time-series obtained from $X(t)$ by way of this operation.)
Analysis of the fluctuations in $X(t)$ may then be performed by measuring the variance of $x(t)$ in windows of varying size $n$ \emph{after} detrending, i.e., $x(t)$ is split into windows
of length $n$, $x_n^{(1)},\dots,x_n^{(j)},\dots,x_n^{(\lfloor N/n \rfloor)}$ and the average variance after detrending the data of $x(t)$ in these windows is formed; i.e. let $P_d$ be the operator which generates the mean-squares estimate of the polynomial fit of degree $d$, then the DFA coefficients or detrended variances of degree $d$ are:
\small
\begin{align*}
F_{DFA}^2(n) = \frac{1}{\lfloor N/n\rfloor \cdot n} \sum_{j}\left(x_n^{(j)}-P_d\left(x_n^{(j)}\right)\right)^\top\left(x_n^{(j)}-P_d\left(x_n^{(j)}\right)\right)
\end{align*}
 \normalsize
Crucially, in the limit of data the slope of $\text{log}(F^2_{DFA}(n))$ against $\text{log}(n)$ converges to $H$. Thus $X(t)$ is power-law correlated LRTC if and only if
the estimate of $H$, $\widehat{H}$, converges to a number greater than $0.5$ in the limit of data. We note here that there are numerous methods for the Hurst exponent; 
these include wavelet estimators \cite{abry1998long}, log-periodogram based methods \cite{giraitis2000adaptive}, among others \cite{robinson1995gaussian}. We chose DFA since it is standard in the physics and neuroimaging literature, 
and yields competitive estimates \cite{rea2009estimators}.

\subsection{Detrended Cross-Correlation Analysis}
\label{sec:Detrended Cross-Correlation Analysis}

In precise analogy to DFA, Podobnik and Stanley propose Detrended Cross-Correlation Analysis (DCCA) \cite{podobnik2008detrended}, an extension of DFA to two time-series, by considering the detrended covariance:
 \small
\begin{align*}
F_{DCCA}^2(n) = \frac{1}{\lfloor N/n \rfloor \cdot n} \sum_{j}\left((x_1)_n^{(j)}-P_d\left((x_1)_n^{(j)}\right)\right)^\top\left((x_2)_n^{(j)}-P_d\left((x_2)_n^{(j)}\right)\right)
\end{align*} 
 \normalsize
 
 Thus DCCA generalizes DFA in the sense that if $X_1=X_2$ then $F_{DCCA}^2(n) = F_{DFA}^2(n)$. 
 To simplify the fact that we will need to consider the DFA coefficients of $X_1$ and $X_2$ simultaneously, we will also refer to the DCCA coefficients as:
 \begin{align*}
 F_{X_1,X_2}^2(n) = F_{DCCA}^2(n)
 \end{align*}

Thus, DCCA quantifies the behaviour of the covariance between $X_1$ and $X_2$ over a range of time-scales given by $n$.
In analogy to the Pearson correlation coefficient, we may consider the \emph{detrended cross correlation coeffcient} \cite{zebende2011study}: 

\begin{align*}
\rho_{DCCA}(n,X_1,X_2) = \frac{F^2_{X_1,X_2}(n)}{\sqrt{F^2_{X_1,X_1}(n)}\sqrt{F^2_{X_2,X_2}(n)}} 
\end{align*}

$\rho_{DCCA}$ quantifies the correlation between $X_1$ and $X_2$ over a range of time-scales.
Note that while the machinery involved in the estimation of $\rho_{DCCA}$ are more complex than for Pearson correlation, 
both coefficients estimate the \emph{same} quantity for stationary time-series, not contaminated by trends. Thus $\rho_{DCCA}$
generalizes the Pearson correlation coefficient. Applicability to non-stationary time-series is particularly important for the neural data analysis.
Whenever the context allows for no ambiguity we abbreviate $\rho_{DCCA}(n,X_1,X_2)$ to $\rho_{DCCA}(n)$.

\section{Theory of the Avalanche Process}
\label{sec:Theory of the Avalanche Process}

We derive here properties which hold for the CPLAD model but not for the PF model. 
The theory focuses on the Hurst exponent of $X(t)$ which we denote $H_{raw}$, the Hurst exponent of $g_\omega(X(t))$, which we denote $H^\omega_{amp}$
and the DCCA correlation coefficient $\rho_{DCCA}(n, g_{\omega_1}(X(t)),g_{\omega_2}(X(t)))$.
Throughout we assume that all power-law distributions are cutoff at a lifetime $L_c$, that $q_s = q$ is fixed and deterministic for all $s$ and that $t$ is greater than $L_c$ to ensure stationarity.

First we consider the PF model.

\begin{Prop}
\label{prop:1}
For the PF model $\mathbb{E}\big(\rho_{DCCA}\big(n,g_{\omega_1}(X(t)),g_{\omega_2}(X(t))\big)\big) \rightarrow 0$
for $n \rightarrow \infty$ and $H^\omega_{amp} = 1/2$. 
\end{Prop}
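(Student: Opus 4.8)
The plan is to reduce the statement to properties of a single complex-Gaussian process, namely the analytic signal of the narrowband-filtered data, and then to exploit two facts: that its complex autocorrelation decays like $1/s$ because its spectrum is band-limited, and that passing to the modulus squares this decay. First I would record that, because $p(L)\sim e^{-AL}$, the input $X'(t)$ is short-range correlated: a lag-$s$ contribution to $\mathrm{Cov}(X'(t),X'(t+s))$ requires an avalanche whose duration exceeds $s$, so the autocovariance is $O(e^{-As})$ and the correlation length is $O(1/A)$. The observed quantity $g_\omega(X(t))$ is the modulus of $z_\omega(t)=f_\omega(X(t))+i\mathcal{H}(f_\omega(X(t)))$, and the map taking $X'$ to $z_\omega$ is a \emph{long} filter (the analytic narrowband filter composed with the passive filter $F$), with impulse response of length $\gtrsim 1/\Delta\omega \gg 1/A$. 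Applying a central limit theorem for weakly dependent sequences to this long averaging of the short-range process $X'$ shows that $z_\omega(t)$ is, to leading order, a stationary circularly symmetric complex-Gaussian process whose spectrum is supported in $[\omega-\Delta\omega,\omega+\Delta\omega]$ and, up to the bounded reweighting $|F|^2$ (essentially constant across so narrow a band), flat there.

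For the claim $H^\omega_{amp}=1/2$ I would compute the complex autocorrelation $\rho(s)=\mathbb{E}[z_\omega(t)\overline{z_\omega(t+s)}]/\mathbb{E}|z_\omega|^2$. Since this is the Fourier transform of a spectrum supported on a band of width $2\Delta\omega$, it decays like a sinc, $|\rho(s)|=O(1/s)$. The decisive point is that the \emph{modulus} autocovariance is governed by $|\rho|^2$: for a circularly symmetric complex-Gaussian pair the covariance of the Rayleigh-distributed moduli expands as $\mathrm{Cov}(|z_\omega(t)|,|z_\omega(t+s)|)=c\,|\rho(s)|^2+O(|\rho(s)|^4)$ with a nonzero constant $c$. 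Hence the autocovariance of $g_\omega(X(t))$ is $O(1/s^2)$, which decays faster than $s^{-1}$; by the paper's convention this forces $H^\omega_{amp}=1/2$, irrespective of the overall $1/f^\gamma$ shape of the raw spectrum.

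For the DCCA statement I would use disjointness of the passbands. When $\omega_1\neq\omega_2$ (with $|\omega_1-\omega_2|>2\Delta\omega$, so the bands do not overlap), the jointly complex-Gaussian pair $(z_{\omega_1},z_{\omega_2})$ has cross-spectrum supported on the empty intersection of the two bands, so $\mathbb{E}[z_{\omega_1}(t)\overline{z_{\omega_2}(t+s)}]=0$ for every $s$. Joint Gaussianity then upgrades this to independence of the two analytic signals, hence independence of the amplitude processes $Y_j=g_{\omega_j}(X)$, which are functions of them. Therefore $\mathrm{Cov}(Y_1(t),Y_2(t+s))=0$ for all $s$, the detrended covariance has vanishing expectation, $\mathbb{E}[F^2_{Y_1,Y_2}(n)]=0$ (detrending of degree $\ge 1$ kills the deterministic trend of the integrated series, so each detrended window has zero mean), while $\mathbb{E}[F^2_{Y_j,Y_j}(n)]>0$. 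Since the sample detrended covariance fluctuates around zero with magnitude negligible relative to the detrended variances at large scale, $\mathbb{E}[\rho_{DCCA}(n,g_{\omega_1}(X),g_{\omega_2}(X))]\to 0$.

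The main obstacle I anticipate is the passage to Gaussianity: making the central limit theorem for the long narrowband filter applied to the weakly dependent superposition $X'$ precise, i.e.\ verifying a mixing or Lyapunov-type condition and controlling the joint (multivariate, complex) limit rather than only the one-dimensional marginal. A secondary technical point is justifying the modulus-covariance expansion $\mathrm{Cov}(|z_1|,|z_2|)=O(|\rho|^2)$ from the bivariate complex-Gaussian density and confirming its leading coefficient is nonzero; this is classical but must be stated with the circular-symmetry normalization (the vanishing pseudo-covariance of the analytic signal of a real process) explicitly in force. Everything else — the exponential bound on the input autocovariance and the sinc decay of the band-limited correlation — is routine.
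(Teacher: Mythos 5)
Your proposal is correct at the paper's level of rigor, but it takes a genuinely different route from the paper's own proof. The paper argues directly from the exponential tail of $p(L)$: it splits $X'(t)$ into short and long avalanches, notes that the short-avalanche term dominates in variance, observes that time points of that term spaced more than $L'$ apart are independent, and then asserts (with the one-line remark that the passive filter only reweights frequencies inside the narrow band) that $g_\omega(X(t))$ inherits this short-range structure, so that $H^\omega_{amp}=1/2$ and the DCCA numerator behaves like that of white noise at large scales. You instead Gaussianize: a central limit theorem for the long narrowband filter applied to the short-range-dependent input makes the analytic signal $z_\omega$ approximately circularly symmetric complex Gaussian, after which you invoke classical narrowband-envelope theory --- band-limitation gives $|\rho(s)|=O(1/s)$, the Rayleigh-modulus expansion gives amplitude autocovariance $O(|\rho(s)|^2)=O(1/s^2)$, hence $H^\omega_{amp}=1/2$ by the paper's convention; and disjoint passbands give zero cross-spectrum, which joint Gaussianity upgrades to full independence of the two envelopes, killing the expectation of the DCCA numerator exactly. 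Your route is more careful on a point the paper glosses over: time points of $g_\omega(X(t))$ spaced more than $L'$ apart are \emph{not} literally independent, because the bandpass filter rings over a time scale $\sim 1/\Delta\omega$; your sinc-decay and modulus-squaring argument is precisely the honest treatment of that ringing, and it also makes transparent why $H_{raw}$ can exceed $1/2$ (through $|F|^2$) while the envelope exponent cannot. It is also consonant with the paper's own handling of {\bf MU1}, where Gaussianity (Propositions 2.3--2.4) is reduced back to Proposition 2.1; you in effect prove the Gaussian case directly. The price is that your argument rests on the CLT approximation (valid when $1/\Delta\omega$ greatly exceeds the correlation length $\sim 1/A$) and on idealized compactly supported passbands for the independence step, but these are idealizations the paper itself makes implicitly, so they do not constitute a gap.
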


\begin{proof}
For the passive filtering model, the right hand term of Equation~(3) of the main paper dominates.
That is:
$$Z(t) = \sum_{L_{s,i}\leq L'} h_{s,i} a_{s,i}\left(\frac{t-s}{L_{s,i}}\right)$$
Time points of $g_\omega(Z(t))$ greater than $L'$ apart are independent so $H^\omega_{amp} = 1/2$.
\newline Moreover   $\mathbb{E}\big(\rho_{DCCA}\big(n,g_{\omega_1}(X(t)),g_{\omega_2}(X(t))\big)\big) \rightarrow 0$ since
the numerator of the DCCA correlation coefficient is given by the DCCA coefficients $F^2_{X_1,X_2}(n)$. These behave
like the coefficients of white noise for large window sizes and therefore have expectation 0 in the limit.
\end{proof}

We now turn to the CPLAD model, to which all of the following propositions apply.
We first aim at understanding the scaling of the number of active large avalanches:

\begin{Prop}
\label{prop:2}
The number of active avalanches is asymptotically $\sim L_c^{2-\alpha}$ for large $L_c$ and $\alpha<2$ and of constant 
order for $\alpha>2$. 
\end{Prop}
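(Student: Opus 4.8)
The plan is to compute the \emph{expected} number of avalanches active at a fixed time $t > L_c$, and to reduce the double sum over start times and avalanche indices to a single integral whose behaviour at its \emph{lower} limit controls the dichotomy. An avalanche that began at time $t-s$ with index $i$ is active at time $t$ exactly when its duration satisfies $L_{t-s,i} > s$. Since exactly $q$ avalanches begin at each time step and their durations are i.i.d.\ power-law samples cut off at $L_c$, the expected count is
\begin{align*}
\mathbb{E}\big[\#\text{active at } t\big] = \sum_{s=0}^{L_c} q\,\mathbb{P}(L > s) \sim \int_0^{L_c} q\left(\int_s^{L_c} L^{-\alpha}\, dL\right) ds,
\end{align*}
where I replace the outer sum by an integral (the continuum approximation, valid for large $L_c$) and use $\mathbb{P}(L>s)\sim\int_s^{L_c} L^{-\alpha}\,dL$, exactly as written in the \textbf{MU1} passage of the main text.

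First I would exchange the order of integration by Fubini. Over the triangular region $\{(s,L): 0\le s\le L\le L_c\}$, integrating $s$ from $0$ to $L$ for each fixed $L$ gives
\begin{align*}
\int_0^{L_c}\!\!\int_s^{L_c} L^{-\alpha}\, dL\, ds = \int_0^{L_c} L^{-\alpha}\left(\int_0^L ds\right) dL = \int_0^{L_c} L^{1-\alpha}\, dL .
\end{align*}
This single integral is the crux: its convergence at the lower limit is governed entirely by the sign of the exponent $1-\alpha$.

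Next I would split into the two regimes. For $\alpha<2$ the exponent satisfies $1-\alpha>-1$, so the integrand is integrable at $0$ and the upper limit dominates, yielding $\int_0^{L_c} L^{1-\alpha}\,dL = L_c^{2-\alpha}/(2-\alpha)\sim L_c^{2-\alpha}$, which grows without bound in the system size. For $\alpha>2$ the integrand is non-integrable at $0$, so I would reinstate the physically natural microscopic cutoff $L\ge 1$ (avalanches last at least one time step in discrete time). Then $\int_1^{L_c} L^{1-\alpha}\,dL = (1-L_c^{2-\alpha})/(\alpha-2)$, and since $L_c^{2-\alpha}\to 0$ as $L_c\to\infty$, this converges to the constant $1/(\alpha-2)$, independent of $L_c$. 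Multiplying by $q$ yields the two claimed asymptotics.

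The main obstacle I anticipate is conceptual rather than computational: making transparent \emph{why} the crossover sits at the lower limit. For $\alpha<2$ the count is dominated by the longest, system-spanning avalanches and hence scales with $L_c$; for $\alpha>2$ it is dominated by the shortest avalanches near the microscopic cutoff and is therefore size-independent. I would take care that the lower-cutoff constant introduced for $\alpha>2$ is genuinely $L_c$-independent, and that the continuum approximation of the start-time sum does not spuriously alter the scaling---both follow because the vanishing of the upper-limit contribution $L_c^{2-\alpha}$ coincides precisely with the case $\alpha>2$. The boundary value $\alpha=2$ (which produces a logarithm) is excluded from the statement and needs no treatment.
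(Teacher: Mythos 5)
Your proposal is correct and takes essentially the same approach as the paper: both approximate the count of active avalanches by the double integral $\int_0^{L_c} q\left(\int_s^{L_c} L^{-\alpha}\,dL\right)ds$ and read off the dichotomy at $\alpha=2$. The only difference is executional --- the paper evaluates the iterated integral directly (implicitly inserting a unit lower cutoff so that $\int_0^{L_c}(t')^{1-\alpha}\,dt'$ is finite when $\alpha>2$), whereas your Fubini swap to $q\int L^{1-\alpha}\,dL$ makes that cutoff explicit and the location of the crossover more transparent.
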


\begin{proof}
The number of active avalanches at time $t$ is equal to:
\begin{align}
\label{eq:1}
& \sum_{t'=0}^{L_c} \#\{L_{t-t',i} | L_{t-t',i} > t' \}  \nonumber \\
&\sim \int_0^{L_c} q\left(  \int_{t'}^{L_c} L^{-\alpha} dL \right) dt'  \nonumber \\
&\sim (1-\alpha) (L_c)^{1-\alpha} -(1-\alpha) \int_0^{L_c} (t')^{1-\alpha} dt' \nonumber \\
&= (1-\alpha) (L_c)^{1-\alpha} - (1-\alpha)(2-\alpha)(L_c^{2-\alpha}-1) \nonumber \\
&\begin{cases} \sim L_c^{2-\alpha} \text{ for $\alpha<2$} \\  = \mathcal{O}(1) \text{ for $\alpha>2$} \end{cases}
\end{align}
\end{proof}

\begin{Prop}
\label{prop:3}
$X(t)$ is approximately Gaussian for large $L_c$ and for $\alpha<2$.
\end{Prop}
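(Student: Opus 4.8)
The plan is to establish approximate Gaussianity through a central limit theorem for triangular arrays, with the array indexed by the cutoff $L_c$. Fix a time $t > L_c$ and write, after centering, $X(t) = \sum_k \xi_k$, where each $\xi_k = h_k\, a_k\!\left((t-s_k)/L_k\right) - \mathbb{E}\big[h_k\, a_k\!\left((t-s_k)/L_k\right)\big]$ is the centered contribution of a single avalanche that is active at time $t$. By the CPLAD assumption the durations $L_k$, heights $h_k$, and shapes $a_k$ are drawn independently across avalanches and across start times, so the $\xi_k$ are mutually independent. By Proposition~\ref{prop:2}, for $\alpha < 2$ the number of summands $N(L_c) \sim L_c^{2-\alpha}$ diverges as $L_c \to \infty$; this divergence is the feature that drives the limit.

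Next I would control the total variance by stratifying over duration. An avalanche of duration $\sim L$ starts at rate $\sim q L^{-\alpha}$ per unit time and remains active over a window of length $L$, so the number of active avalanches with duration in $[L, L+dL]$ scales as $L^{1-\alpha}\,dL$; since $h \sim L^\beta$ and $a$ has unit variance, each contributes variance $\sim L^{2\beta}$. Integrating over strata gives
\[
\sigma_N^2 \;=\; \sum_k \mathrm{Var}(\xi_k) \;\sim\; \int_0^{L_c} L^{\,1-\alpha+2\beta}\,dL \;\sim\; L_c^{\,2-\alpha+2\beta}.
\]

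The crux of the argument, and the main obstacle, is the verification of a Lyapunov (or Lindeberg) condition, which is nontrivial precisely because the height law is heavy-tailed and the largest avalanches carry disproportionate variance. Taking $\delta = 2$ and assuming $\mathbb{E}[a^4] < \infty$, the aggregated fourth moments satisfy $\sum_k \mathbb{E}[\xi_k^4] \sim \int_0^{L_c} L^{\,1-\alpha+4\beta}\,dL \sim L_c^{\,2-\alpha+4\beta}$, so that
\[
\frac{\sum_k \mathbb{E}[\xi_k^4]}{\sigma_N^4} \;\sim\; \frac{L_c^{\,2-\alpha+4\beta}}{L_c^{\,4-2\alpha+4\beta}} \;=\; L_c^{-(2-\alpha)} \;\to\; 0 \quad \text{for } \alpha < 2.
\]
Equivalently, the variance of the single largest avalanche, $\sim L_c^{2\beta}$, is a fraction $\sim L_c^{-(2-\alpha)}$ of $\sigma_N^2$, so no single term dominates. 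This is exactly where the hypothesis $\alpha < 2$ enters, and the suppression factor turns out to be independent of $\beta$, which is why Gaussianity should hold throughout {\bf MU1}. (If $a$ is bounded one may instead verify Lindeberg directly, since then $\max_k |\xi_k|/\sigma_N \sim L_c^{-(2-\alpha)/2} \to 0$, making the Lindeberg sum vanish for large $L_c$.)

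Finally I would upgrade marginal Gaussianity to joint Gaussianity of the finite-dimensional distributions via the Cram\'er--Wold device: for any finite set of times $t_1,\dots,t_m$ and reals $c_1,\dots,c_m$, the linear combination $\sum_j c_j X(t_j)$ is again a sum over per-avalanche contributions that are independent across avalanches (each avalanche enters the combination as one independent term), and the same Lyapunov estimate applies after rescaling. Hence every finite-dimensional distribution converges to a Gaussian as $L_c \to \infty$, so $X(t)$ is approximately a Gaussian process; its limiting covariance is inherited from the second-order structure computed above and produces the power-law autocorrelation referenced in the main text.
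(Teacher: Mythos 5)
Your proof follows essentially the same route as the paper's: both center the per-avalanche contributions, use the divergence of the number of active avalanches ($k \sim L_c^{2-\alpha}$, Proposition~\ref{prop:2}) as the driver of the limit, verify a Lyapunov condition (you with $\delta = 2$ and a fourth-moment assumption, the paper with general $\delta$), and upgrade to joint Gaussianity via the Cram\'er--Wold device. Your duration-stratified variance bookkeeping, $\sigma_N^2 \sim L_c^{2-\alpha+2\beta}$, is actually tidier than the paper's normalization $s_k^2 = k L_c^{2\beta+\alpha+1}$ (which carries an extraneous factor of $L_c^{\alpha+1}$ that cancels in the ratio), but this is a refinement of the same argument, not a different one.
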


\begin{proof}
At any time point we have a number $k \sim L_c^{2-\alpha}$ (Proposition~\ref{prop:2}) of avalanches which commenced each at times
$s_i = 1,\dots, k$ with lengths $L_i$.
Then we may write:
\begin{align*}
X(t) = \sum_{i=1}^k L_i^\beta a\left(\frac{t-s_i}{L_i}\right)
\end{align*}
For $k$ large, which may be assumed by Proposition~\ref{prop:2}, after normalization and centering we apply the Lyapunov central limit condition \cite{billingsley1995probability} to:
\begin{align*}
\overline{X}(t) = \frac{1}{\sqrt{k} L_c^{\beta+(\alpha+1)/2}}\sum_{j=1}^k L_i^\beta \left( a\left(\frac{t-s_i}{L_i}\right) - \mu\right) 
\end{align*}
where $\mu$ is the mean value of $a(t)$ over the interval $[0,1]$.
Let $s_k^2 = k L_c^{2\beta+\alpha+1}$, then for the Lyapunov condition to apply we need to show, for some $\delta>0$, that:
\begin{align*}
\frac{1}{s_k^{2+\delta}} \sum_{j=1}^k \mathbb{E}\left|  L_i^\beta \left( a\left(\frac{t-s_i}{L_i}\right) - \mu\right) \right|^{2+\delta} \rightarrow 0
\end{align*}
Thus:
\begin{align*}
&\frac{1}{s_k^{2+\delta}} \sum_{j=1}^r \mathbb{E}\left|  L_i^\beta \left( a\left(\frac{t-s_i}{L_i}\right) - \mu\right) \right|^{2+\delta} \\
&\sim \frac{1}{k^{1+\delta/2} L_c^{(\beta+\alpha/2+1/2)(2+\delta)}} \sum_{i=1}^k  L_i^{(2+\delta)\beta} \\
&\sim \frac{1}{k^{1+\delta/2} L_c^{(\beta+\alpha/2+1/2)(2+\delta)}} k L_c^{\beta(2+\delta)+\alpha+1} \\
&\sim \frac{L_c^{\alpha+1}}{k^{\delta/2} L_c^{(\alpha/2+1/2)(2+\delta)} } \\
& \leq  \frac{1}{k^{\delta/2} L_c^{\delta/2} } \longrightarrow 0
\end{align*}

This completes the proof for univariate Gaussianity. The proof that convergence is to a multivariate Gaussian process is a simple extension
via the Cramer-Wold device \cite{billingsley1995probability}.

\end{proof}

\begin{Prop}
\label{prop:4}
For $\alpha<2$, $\mathbb{E}\big(\rho_{DCCA}\big(n,g_{\omega_1}(X(t)),g_{\omega_2}(X(t))\big)\big) \rightarrow 0$
for $n \rightarrow \infty$ and $H^\omega_{amp} = 1/2$.
\end{Prop}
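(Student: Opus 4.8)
The plan is to leverage Proposition~\ref{prop:3}, which for $\alpha<2$ identifies $X(t)$ as an asymptotically Gaussian, stationary process with power-law autocovariance, and to reduce the claim to the same mechanism that drove Proposition~\ref{prop:1}. The organising observation is that a Gaussian process is determined entirely by its second-order structure and can be realised as white noise passed through a linear filter; consequently $X(t)$ is, in distribution, of exactly the same type as the filtered-noise signal of the PF model, and the power law merely reweights frequencies without altering the local structure within any narrow band. Since $f_\omega$ and $\mathcal{H}$ are linear, the analytic signal $Z_\omega(t) = f_\omega(X(t)) + i\mathcal{H}(f_\omega(X(t)))$ is a stationary, circularly symmetric complex Gaussian process, so its envelope $g_\omega(X(t)) = |Z_\omega(t)|$ is a nonlinear functional of a Gaussian process whose correlation decay I can compute explicitly.

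For the statement $H^\omega_{amp}=1/2$ I would first write the complex autocorrelation of the analytic signal as $R_\omega(\tau) = \mathbb{E}[Z_\omega(t)\overline{Z_\omega(t+\tau)}] = \rho_\omega(\tau)\,e^{i\omega\tau}$, where the envelope $\rho_\omega$ is a low-pass function whose decay length is set by the fixed filter bandwidth $\Delta\omega$. Because the band is narrow, the power-law spectrum of $X(t)$ is essentially constant across the passband, so $\rho_\omega(\tau)$ decays on the finite scale $\sim 1/\Delta\omega$ rather than as a power law. I would then invoke the Gaussian moment identity for the squared envelope, namely $\mathrm{Cov}\big(|Z_\omega(t)|^2,|Z_\omega(t+\tau)|^2\big) = |R_\omega(\tau)|^2 = \rho_\omega(\tau)^2$, which therefore decays faster than $\tau^{-1}$. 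The amplitude $g_\omega(X(t))$ inherits the same summable (short-range) autocovariance, so by the definition of the Hurst exponent adopted in the main text its DFA exponent is $H^\omega_{amp}=1/2$, exactly as for white noise.

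For the DCCA statement I would use disjointness of the passbands, which holds for $\omega_1\neq\omega_2$ separated by more than the bandwidth. The narrowband components then carry disjoint frequency content, so the cross-spectrum vanishes and $\mathbb{E}[Z_{\omega_1}(t)\overline{Z_{\omega_2}(t+\tau)}]=0$ for every lag $\tau$. Because $(Z_{\omega_1},Z_{\omega_2})$ are jointly Gaussian, vanishing covariance upgrades to independence, so the envelopes $g_{\omega_1}(X(t))$ and $g_{\omega_2}(X(t))$ are independent time-series. The DCCA numerator $F^2_{g_{\omega_1},g_{\omega_2}}(n)$ is then a detrended sample covariance of two independent sequences and, as in the proof of Proposition~\ref{prop:1}, behaves like the cross-coefficient of independent white noises, with expectation tending to $0$ as $n\to\infty$; the denominators $F^2_{g_{\omega_i},g_{\omega_i}}(n)$ concentrate on strictly positive limits, so $\mathbb{E}\big(\rho_{DCCA}(n,g_{\omega_1}(X(t)),g_{\omega_2}(X(t)))\big)\to 0$.

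The \textbf{main obstacle} is controlling the interaction between the asymptotic nature of the Gaussian approximation (Proposition~\ref{prop:3} holds only in the large-$L_c$ limit) and the envelope computations, together with justifying that the long-range correlations present in $X(t)$ genuinely do not leak into the envelope. This reduces to showing that $\rho_\omega$ decays faster than $\tau^{-1}$, which requires quantifying the error in treating the power-law spectrum as flat across a band of width $\Delta\omega$ about $\omega$; the secondary technical point is the standard one of showing that the detrending operator $P_d$ in the DCCA estimator does not reintroduce spurious covariance, so that the sample detrended covariance of the independent envelopes really does vanish in expectation.
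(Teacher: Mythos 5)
Your proposal is correct, and it starts exactly where the paper does: invoke Proposition~\ref{prop:3} for Gaussianity and observe that a Gaussian process is determined by its second-order structure, hence realizable as linearly filtered white noise. The paper's own proof stops essentially at that observation --- it notes that $X(t)$ then behaves like fractional Gaussian noise, which can be generated by long-range filtering of white noise, and declares this to be ``exactly the situation of Proposition~\ref{prop:1}'', inheriting $H^\omega_{amp}=1/2$ and the vanishing DCCA limit from the PF analysis. Where you differ is that you do not outsource the conclusion: you carry out the narrowband envelope analysis yourself, writing the analytic signal $Z_\omega(t)$ as a circularly symmetric complex Gaussian, using the Isserlis/Wick identity $\mathrm{Cov}\big(|Z_\omega(t)|^2,|Z_\omega(t+\tau)|^2\big)=|R_\omega(\tau)|^2$ to show that the envelope autocovariance is summable (because the spectral divergence at the origin, which is what makes $X(t)$ LRTC, lies outside the passband), and deriving the DCCA limit from the fact that jointly Gaussian processes with disjoint spectral support are independent. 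This buys real rigor that the paper's chain of reductions lacks: Proposition~\ref{prop:1}'s own proof rests on the avalanche structure of the PF model (independence of time points more than $L'$ apart), which does not literally transfer to a genuinely long-range-correlated Gaussian process, so the paper's appeal to it is a gap that your Wick-identity computation actually fills. The obstacles you flag --- quantifying the large-$L_c$ Gaussian approximation before taking envelope statistics, and checking that the detrending operator does not reintroduce spurious covariance --- are genuine, but they are silently assumed in the paper's version as well.
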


\begin{proof}
By Proposition~\ref{prop:3}, $X(t)$ is a Gaussian process, which is uniquely defined by its first two moments.
Over long time-scales, therefore, $X(t)$ behaves like fractional Brownian noise \cite{taqqu1975weak}.
One approach to generating fractional Brownian noise is by long-range filtering of a Gaussian white noise \cite{amblard2013basic}.
This brings us exactly into the situation of Proposition~\ref{prop:1}.
\end{proof}

\begin{Prop}
\label{prop:5}
The probability that avalanches of length $L'$ occur simultaneously vanishes for large $L'$ when $\alpha>2$. 
\end{Prop}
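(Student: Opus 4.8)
The plan is to reduce the statement to a computation of the expected number of long avalanches that are simultaneously active at a single time point, and then to pass from this expectation to a probability bound. First I would fix a reference time $t > L_c$ (so the process is stationary) and let $N_{\ge L'}$ denote the number of avalanches of duration at least $L'$ that are active at $t$ (reading ``length $L'$'' as $L_{s,i} > L'$, consistent with the main text). An avalanche that commenced at time $s$ with duration $L$ is active at $t$ exactly when $0 \le t - s < L$; writing $t' = t-s$ for its age, the event that it is both long and active is $\{L \ge L',\, L > t'\} = \{L \ge \max(L',\,t'+1)\}$.

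The core step is to evaluate $\mathbb{E}[N_{\ge L'}]$, which is precisely the occupancy quantity $\tfrac{1}{T}\sum_{L_{s,i}>L'} L_{s,i}$ from the main text. Summing over ages $t'\in\{0,\dots,L_c\}$, with $q$ fresh avalanches per time step and tail $\mathbb{P}(L \ge \ell) \sim \ell^{1-\alpha}$, I would split the sum at $t'=L'$: for $t' < L'$ the binding constraint is $L \ge L'$, giving a contribution $\sim q\,L'\cdot L'^{1-\alpha}$ over the $L'$ short ages; for $t' \ge L'$ the binding constraint is $L > t'$, giving $\sim q\int_{L'}^{L_c} t'^{1-\alpha}\,dt'$. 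Collecting terms yields
\begin{align*}
\mathbb{E}\big[N_{\ge L'}\big] \;\sim\; q\,L'^{2-\alpha} \;+\; \frac{q}{2-\alpha}\big(L_c^{2-\alpha}-L'^{2-\alpha}\big).
\end{align*}
For $\alpha>2$ the cutoff term $L_c^{2-\alpha}$ vanishes as $L_c\to\infty$, and the surviving terms both scale as $L'^{2-\alpha}$, so that $\mathbb{E}[N_{\ge L'}] \sim L'^{2-\alpha}\to 0$ as $L'\to\infty$, in agreement with the constant-order regime of Proposition~\ref{prop:2}.

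It then remains to turn this vanishing mean into a vanishing probability of simultaneity. Since the durations $L_{s,i}$ of distinct avalanches are drawn independently, $N_{\ge L'}$ is a sum of independent Bernoulli indicators, one for each pair $(s,i)$. I would apply the second factorial moment bound,
\begin{align*}
\mathbb{P}\big(N_{\ge L'}\ge 2\big) \;\le\; \tfrac{1}{2}\,\mathbb{E}\big[N_{\ge L'}(N_{\ge L'}-1)\big] \;\le\; \tfrac{1}{2}\big(\mathbb{E}\,N_{\ge L'}\big)^2 \;\sim\; L'^{2(2-\alpha)},
\end{align*}
which tends to $0$ for $\alpha>2$ (the cruder Markov bound $\mathbb{P}(N_{\ge L'}\ge 1)\le \mathbb{E}\,N_{\ge L'}$ already forces the stated vanishing). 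Hence the probability that two avalanches of length at least $L'$ are active at the same instant vanishes as $L'\to\infty$. The hard part is the occupancy computation, namely the careful bookkeeping of the two age-regimes and the verification that the $L_c$-dependent term drops out precisely when $\alpha>2$; once the mean is controlled, the passage to the probability is a routine independence argument. This non-overlap of long avalanches is exactly what licenses replacing the amplitude of the sum by the sum of amplitudes in the {\bf MU2} analysis.
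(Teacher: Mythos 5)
Your proposal is correct and follows essentially the same route as the paper: the paper's proof likewise computes the occupancy quantity $\frac{1}{T}\sum_{L_i>L'} L_i \sim \int_{L'}^{L_c} L^{1-\alpha}\,dL \sim L'^{2-\alpha}$ and concludes that simultaneity of long avalanches becomes negligible for $\alpha>2$. Your version merely makes explicit two steps the paper leaves implicit --- the split over the two age regimes in the expectation, and the Markov/second-moment passage from vanishing mean to vanishing probability --- which is a welcome tightening rather than a different argument.
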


\begin{proof}
In analogy to Proposition~\ref{prop:2} equal to $L'$ is the time spent in large avalanches divided by the total recorded time:

\begin{align*}
\frac{1}{T}\sum_{L_i>L'} L_i &\sim q \mathbb{E}(\mathbbm{1}_{(L',\infty)}(L)) \\
&\sim \int^{L_c}_{L'} L^{1-\alpha} d L \\
&\sim L'^{2-\alpha}
\end{align*}

Thus the probability that two avalanches of length greater than $L'$ occur $\rightarrow 0$.
\end{proof}

\begin{Prop}
\label{prop:6}
The Hurst exponent $H_{raw}$ of $X(t)$ satisfies the approximate relation:
\begin{align}
\label{eq:2}
H_{raw} &= 
\begin{cases} \beta -\frac{\alpha}{2} + 2 \hspace{3mm} &\text{ if $\tau< 3$}  \\
1/2 &\text{ if $\tau>3$}
\end{cases}
\end{align}
\end{Prop}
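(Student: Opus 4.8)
The plan is to reduce Proposition~\ref{prop:6} to a single computation of the autocovariance of $X(t)$, read off a candidate scaling exponent, and then let the paper's own convention (autocovariances decaying faster than $s^{-1}$ are assigned $H=1/2$) produce the dichotomy at the critical value of $\tau$. First I would set up the shot-noise representation that is legitimate for $\alpha>2$: by Proposition~\ref{prop:2} the number of simultaneously active avalanches is $\mathcal{O}(1)$, and by Proposition~\ref{prop:5} the large avalanches, which carry the long-range part of the fluctuations, essentially never overlap. Hence over long lags $X(t)$ may be treated as a superposition of independent pulses $h_i\,a\big((t-s_i)/L_i\big)$ with $h_i=L_i^{\beta}$, start times $s_i$ arriving at fixed rate $q$, and durations $L_i$ drawn i.i.d.\ from $p(L)\sim L^{-\alpha}$ cut off at $L_c$.

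Second, I would apply Campbell's theorem for the covariance of an independent-pulse process. Writing $R_a(\xi)=\int a(v)a(v+\xi)\,dv$ for the compactly supported autocorrelation of the unit-variance shape (so $R_a(0)=1$ and $R_a$ is supported on $|\xi|\le 1$), the change of variables $v=u/L$ gives
\[
C(s)=q\,\mathbb{E}_L\!\left[L^{2\beta+1}R_a(s/L)\right]\sim q\int_1^{L_c}L^{2\beta+1-\alpha}R_a(s/L)\,dL ,
\]
and the further substitution $w=s/L$ turns this into $C(s)\sim q\,s^{2\beta+2-\alpha}\int_{s/L_c}^{1}w^{\alpha-2\beta-3}R_a(w)\,dw$. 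The remaining integral controls everything. When $\alpha<2\beta+3$ it converges to a positive constant as $L_c\to\infty$, so $C(s)\sim s^{2\beta+2-\alpha}$; matching to the paper's convention $C(s)\sim s^{2H-2}$ yields $2H-2=2\beta+2-\alpha$, i.e.\ $H_{raw}=\beta-\alpha/2+2$. When instead $\alpha>2\beta+3$ the single-pulse correlations are too short-ranged, $C(s)$ decays faster than $s^{-1}$, and the convention forces $H_{raw}=1/2$. Recalling that the avalanche-size exponent is $\tau=(\alpha+\beta)/(\beta+1)$, the threshold $\alpha=2\beta+3$ is exactly $\tau=3$, which is the stated dichotomy; equivalently, the candidate exponent $\beta-\alpha/2+2$ exceeds $1/2$ precisely when $\tau<3$.

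The main obstacle is the rigorous control of the heavy-tailed integral near its lower limit together with the order of the limits $s\to\infty$ and $L_c\to\infty$: the cutoff $s/L_c$ is exactly what regularizes the (near-)divergence, so one must track finite-size effects rather than passing to $L_c=\infty$ naively. A secondary subtlety is that the clean scaling $C(s)\sim s^{2\beta+2-\alpha}$ holds only while the integral converges, i.e.\ for $2\beta+2<\alpha<2\beta+3$; for $\alpha<2\beta+2$ the candidate $2H-2$ becomes positive and $H_{raw}$ saturates at $1$ (the process is no longer stationary on accessible scales), so the stated formula must be read as capped there. I would therefore establish the clean statement on $2<\tau<3$ where every integral converges, then obtain the endpoint $\tau=3$ and the saturation regime by continuity and the $H=1/2$ convention. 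Finally I would cross-check via the power spectrum $S(\omega)\sim\omega^{\alpha-2\beta-3}$, produced by the same pulse-superposition argument, which returns the identical $H_{raw}$ through $S(\omega)\sim\omega^{1-2H}$, and emphasize that the non-overlap guaranteed by Proposition~\ref{prop:5} is what licenses representing the dominant long-range fluctuations as a sum over individual avalanches in the first place.
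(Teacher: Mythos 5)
Your proposal is correct and is essentially the paper's own argument: the paper likewise treats $X(t)$ as a superposition of independent power-law pulses and computes the autocovariance --- citing Jensen's box-function calculation, which is precisely your Campbell formula specialized to $R_a(\xi)=(1-|\xi|)_+$, with the same implicit convergence condition $\alpha>2\beta+2$ --- and then converts the decay exponent to $H_{raw}$ via $\gamma=2-2H$. Your rendering is somewhat more careful (general pulse shape, explicit tracking of the cutoff $L_c$, an explicit $\tau>3$ branch, and the saturation caveat for $\alpha<2\beta+2$, which the paper's integral $\int_{|t|}^{\infty}(L-|t|)L^{2\beta-\alpha}\,dL$ suffers from equally but silently); the only slip is labeling the clean regime $2\beta+2<\alpha<2\beta+3$ as $2<\tau<3$, when under the paper's Equation~\eqref{eq:4} it is in fact $3-\tfrac{1}{\beta+1}<\tau<3$.
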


\begin{proof}
Following \cite{jensen19891} we approximate avalanches by a box function: 
\begin{align*}
a(t) = \begin{cases} 1 \text{ for }t \in [0,1] \\ 0 \text{ otherwise} 
\end{cases}
\end{align*}

The authors show that in this case the autocorrelation function $r(t) = \mathbb{E}(X(s)X(s+t)) - \mathbb{E}(X(s))^2$ satisfies:
\begin{align*}
r(t) &\propto \int_{|t|}^\infty (L-|t|)\int_0^\infty L^{-\alpha} L^{2\beta} dL \\
&\sim |t|^{2\beta-\alpha + 2}
\end{align*}
Using the fact that if the autocorrelation function scales as $r(t) \sim {t^{-\gamma}}$, then the 
Hurst exponent and $\gamma$ are related as $\gamma = 2-2H$ \cite{kantelhardt2001detecting},
so that:
\begin{align}
\label{eq:3}
H_{raw} &= \beta-\frac{\alpha}{2}+2
\end{align}
\end{proof}

We adopt the convention that by $g_\omega(a)(t)$ we refer to the $t^\text{th}$ time-point of $g_\omega(a(t))$.
At criticality we have the relationship:
$$g_\omega(L_i^\beta a(t/L_i)) \sim L_i^{\beta'} g_\omega(a)(t/L_i)$$
This scaling relationship is illustrated in Figure~\ref{fig:filtered_illustration}. 
The left hand panel illustrates avalanches whose heights scale according to $L^\beta$, with $\beta = 1$.
The right hand panel displays $g_\omega(a(t/L))$. One sees that the heights of these filtered avalanches
scale more slowly than the original avalanches.

We require an additional known relation between critical exponents.
Let $S$ be the size of an avalanche, i.e.~the total activity occurring over the course of the avalanche: $S = \int_0^L L^\beta a(t/L) dt$.
Then, at criticality:
\begin{align*}
S \sim S^{-\tau}
\end{align*}
Then we have the following relation \cite{sethna2001crackling} between critical exponents:
\begin{align}
\label{eq:4}
\frac{\alpha-1}{\tau-1} = \beta+1
\end{align}
If we define $\alpha' = \alpha$ then we may also define $\tau'$ by Equation~\eqref{eq:4} so that:
\begin{align}
\label{eq:5}
\frac{\alpha'-1}{\tau'-1} = \beta'+1
\end{align}
The size distribution is important for our subsequent analyses because whether the asymptotic properties of the process $X(t)$
are dominated by the large avalanches depends on whether the size distributions have infinite variance or not.

\begin{figure}
\begin{center}
\includegraphics[width=140mm]{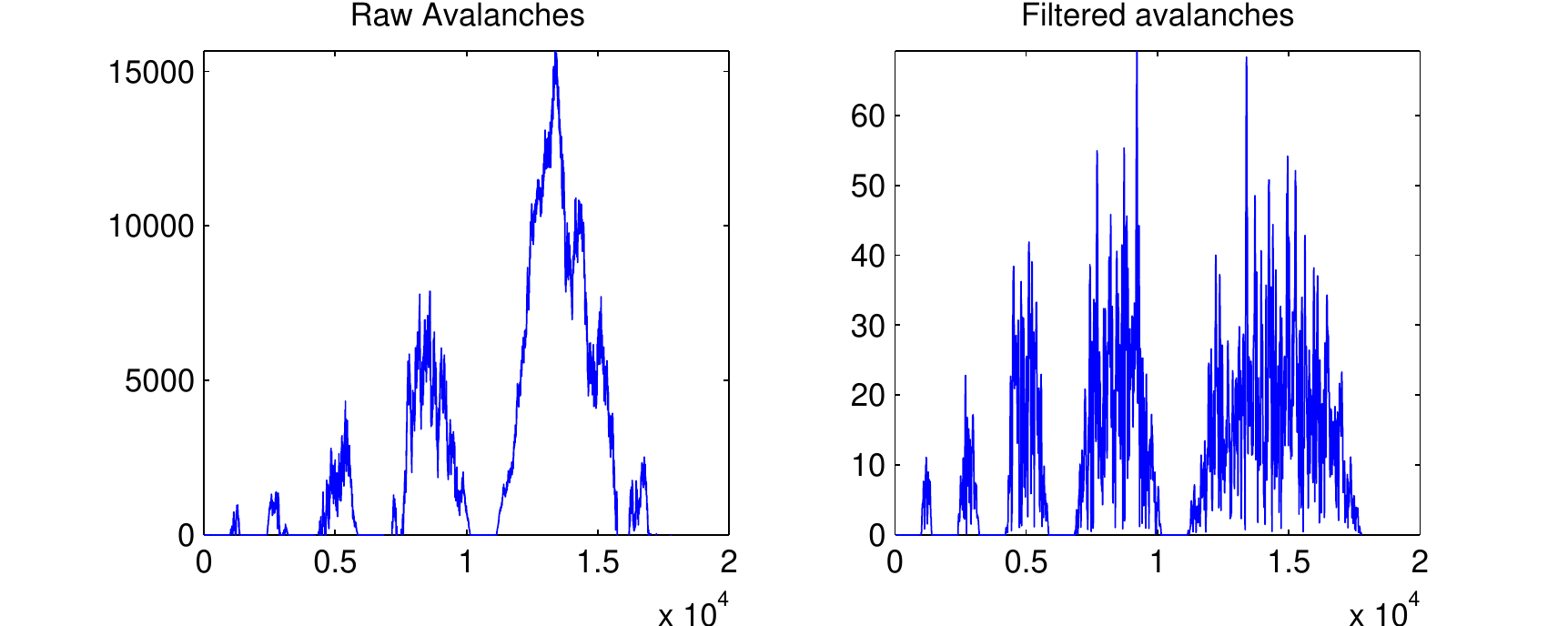}
\end{center}
\caption[Scaling of Filtered Avalanches]{The figure illustrates the difference in scaling between 
the raw avalanches and their filtered amplitudes, when $\beta = 1$.
The left hand panel displays avalanches $L^\beta a(t/L)$ with log-spaced lifetimes $L$ between 400 and 7000.
The right hand panel displays the narrowband amplitudes of these avalanches $L^{\beta'} g_\omega(a)(t/L)$.
Since $\beta > \beta'$ we see that the narrowband amplitudes scale less steeply than the raw avalanches. See Section~\ref{sec:Theory of the Avalanche Process}.}
\label{fig:filtered_illustration}
\end{figure}

\newpage

\begin{Prop}
\label{prop:7}
 $\beta' = \beta/2$. 
 \end{Prop}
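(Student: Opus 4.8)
The plan is to read the scaling relation $g_\omega(L^\beta a(t/L))\sim L^{\beta'}g_\omega(a)(t/L)$ as a statement about the typical magnitude of the narrowband \emph{envelope}, and to pin down $\beta'$ by combining the self-affine scaling of the avalanche spectrum with the critical power law $\mathcal{P}[\omega]\sim\omega^{-\beta-1}$ already built into the model. The factor $1/2$ will emerge because $g_\omega$ is a root-power (amplitude) quantity, whereas the height $L^\beta$ enters the spectral energy quadratically.

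First I would pass to the Fourier domain. Writing $b(t)=L^\beta a(t/L)$, the time-dilation rule gives $\hat b(\xi)=L^{\beta+1}\hat a(L\xi)$, hence $|\hat b(\xi)|^2=L^{2\beta+2}|\hat a(L\xi)|^2$. The filter $f_\omega$ retains only $\xi\in[\omega-\Delta\omega,\omega+\Delta\omega]$, so by Parseval the surviving energy is $E_\omega(b)=\int_{\omega-\Delta\omega}^{\omega+\Delta\omega}|\hat b(\xi)|^2\,d\xi$, and the substitution $\eta=L\xi$ turns this into $L^{2\beta+1}\int_{L(\omega-\Delta\omega)}^{L(\omega+\Delta\omega)}|\hat a(\eta)|^2\,d\eta$. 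Next I would insert the critical spectrum $|\hat a(\eta)|^2\sim\eta^{-\beta-1}$ coming from the within-avalanche fluctuations $c(t)\epsilon(t)$; a second change of variables $\eta=L\zeta$ pulls $L$ out of the power law, giving $\int_{L(\omega-\Delta\omega)}^{L(\omega+\Delta\omega)}\eta^{-\beta-1}\,d\eta=L^{-\beta}\int_{\omega-\Delta\omega}^{\omega+\Delta\omega}\zeta^{-\beta-1}\,d\zeta=L^{-\beta}E_\omega(a)$, so that $E_\omega(b)=L^{\beta+1}E_\omega(a)$. This $L$-power is robust: it follows from homogeneity of the power law alone and does not even require the band to be narrow.

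The $1/2$ then appears in the conversion of band-energy to squared amplitude. Since $f_\omega(b)$ is narrowband, $f_\omega(b)(t)\approx g_\omega(b)(t)\cos(\omega t+\phi(t))$ with slowly varying envelope $g_\omega(b)$, whence $\int|f_\omega(b)|^2=\tfrac12\int g_\omega(b)^2\approx\tfrac12 L\,\langle g_\omega(b)^2\rangle$ because the envelope is supported on an interval of length $\sim L$. Thus $\langle g_\omega(b)^2\rangle\sim E_\omega(b)/L=L^{\beta}E_\omega(a)$, while the base avalanche (support length $\sim 1$) satisfies $\langle g_\omega(a)^2\rangle\sim E_\omega(a)$. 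Taking square roots and matching against the defining relation at rescaled time $t/L$ gives $L^{\beta'}=\big(\langle g_\omega(b)^2\rangle/\langle g_\omega(a)^2\rangle\big)^{1/2}=L^{\beta/2}$, i.e.\ $\beta'=\beta/2$. As a consistency check, this is exactly the substitution $\beta\mapsto\beta/2$ that turns the raw-signal formula $H_{raw}=\beta-\alpha/2+2$ of Proposition~\ref{prop:6} into the amplitude formula $H^\omega_{amp}=\beta/2-\alpha/2+2$ asserted for {\bf MU2}.

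The hard part is the third step: passing from the integrated Parseval identity to the \emph{local} envelope scaling claimed by the proposition. The relation $g_\omega(L^\beta a(t/L))\sim L^{\beta'}g_\omega(a)(t/L)$ is not a literal pointwise identity but a statement about the typical envelope magnitude at rescaled time, and justifying it requires the filter impulse response, of fixed width $\sim 1/\Delta\omega$, to be short compared with the avalanche length $L$. For large $L$ this localizes the filter so the envelope is effectively locally stationary and the energy-to-RMS conversion applies uniformly over the bulk of the support; one must then argue separately that the two boundary layers of width $\sim 1/\Delta\omega$ contribute a vanishing fraction of the energy as $L\to\infty$. The second input to treat carefully is the meaning of $|\hat a(\eta)|^2\sim\eta^{-\beta-1}$: it should be read as the expected energy spectral density of the stochastic within-avalanche activity over the frequency window $L(\omega\pm\Delta\omega)$ swept out as $L$ grows, and I would take it as given from the crackling-noise scaling theory cited in the paper rather than re-derive it here.
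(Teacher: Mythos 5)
Your proposal is correct and follows essentially the same route as the paper's own proof: both arguments extract the factor $L^{-\beta}$ by a change of variables in the band integral of the critical spectrum $\mathcal{P}[\omega]\sim\omega^{-\beta-1}$ under the frequency rescaling induced by time dilation, and then take a square root to pass from power to amplitude, yielding $L^{\beta}\cdot L^{-\beta/2}=L^{\beta/2}$. Your Parseval/energy bookkeeping (normalizing the band energy by the support length $L$, with the carrier--envelope factor $1/2$) is just a more explicit version of the paper's variance computation, and your closing caveats about the non-pointwise meaning of the scaling relation and the filter's localization make the argument, if anything, more careful than the original.
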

 
 \begin{proof}
 We have that $\mathcal{P}[\omega] \sim \omega^{-\beta-1}$. This implies that
 the standard deviation of $f_{L\omega, L\Delta \omega}(a(t))$ scales according to $L^{-\beta/2}$.
 This is because: 
 \begin{align*}
 \text{var}(f_{\omega, \Delta \omega}(a(t))) &\sim \int_{\omega-\Delta \omega}^{\omega+\Delta \omega} \mathcal{P}[\omega'] d\omega' \\
 &\sim \frac{1}{\omega^{\beta+1}}
 \end{align*}
 And therefore:
  \begin{align*}
 \text{var}(f_{L \omega, L \Delta \omega}(a(t))) &\sim \int_{L \omega- L \Delta \omega}^{L \omega+L \Delta \omega} \mathcal{P}[\omega'] d\omega' \\
 &\sim  \int_{L \omega- L \Delta \omega}^{L \omega+L \Delta \omega}\frac{1}{\omega'^{\beta+1}} d\omega' \\
 &\sim   \int_{\omega- \Delta \omega}^{\omega+ \Delta \omega}\frac{1}{(L\omega')^{\beta+1}} L d\omega' \\
 &\sim \frac{1}{L^\beta} \text{var}(f_{\omega, \Delta \omega}(a(t)))
 \end{align*}
 Therefore: $$\text{std}(f_{L\omega, L\Delta \omega}(a(t)))) \sim L^{-\beta/2}$$
 and: 
 \begin{align*}
 f_{\omega,\Delta \omega}( L_i^\beta a_{i}(t/L_i)) &\sim L_i^\beta f_{L_i \omega,L_i \Delta \omega}(a)(t/L_i) \\
 &\sim L_i^{\beta/2} f_{\omega,\Delta \omega}(a) (t/L)
 \end{align*}
 Therefore by linearity of the Hilbert transform:
 $$ g_\omega(  L_i^\beta a_{i}(t/L_i)) \sim L_i^{\beta/2} g_{\omega}(a) (t/L_i)$$
 \end{proof}

\begin{Prop}
\label{prop:8}
The Hurst exponent $H^\omega_{amp}$ of $g_\omega(X(t))$ satisfies the approximate relation:
\begin{align}
\label{eq:6}
H_{amp} &= 
\begin{cases} \frac{\beta}{2} -\frac{\alpha}{2} + 2 \hspace{3mm} &\text{ if $\alpha \leq \beta+3$, $\alpha>2$} \\
1/2 &\text{ otherwise }
\end{cases}
\end{align}
\end{Prop}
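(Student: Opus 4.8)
The plan is to exploit Proposition~\ref{prop:7}: passing from the raw signal to the narrowband amplitude replaces the height exponent $\beta$ by $\beta' = \beta/2$, so that the envelope process $g_\omega(X(t))$ has exactly the same structural form as $X(t)$ but governed by the pair $(\alpha,\beta/2)$ instead of $(\alpha,\beta)$. I would therefore port the box-function autocorrelation calculation of Proposition~\ref{prop:6} to the amplitude, and read off $H^\omega_{amp}$ by the formal substitution $\beta \mapsto \beta/2$.

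First I would assume $\alpha>2$ and invoke Proposition~\ref{prop:5} to guarantee that the large avalanches (those with $L_{s,i}>L'$) do not overlap. Combined with the scaling relation of Proposition~\ref{prop:7}, this gives the representation
\begin{align*}
g_\omega(X(t)) \sim \sum_{L_{s,i} > L'} L_{s,i}^{\beta/2}\, g_\omega(a)\!\left(\frac{t-s}{L_{s,i}}\right),
\end{align*}
a sum of non-overlapping envelopes whose heights scale as $L^{\beta/2}$. Approximating $g_\omega(a)$ by a box of unit support, which is legitimate once $\int_0^1 g_\omega(a(t))\,dt$ is finite, and repeating the computation of Proposition~\ref{prop:6} with $\beta$ replaced by $\beta/2$, yields an autocorrelation $r_{amp}(t) \sim |t|^{\beta - \alpha + 2}$; the relation $\gamma = 2-2H$ then delivers $H^\omega_{amp} = \beta/2 - \alpha/2 + 2$.

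It remains to decide when this LRTC formula applies rather than the trivial value $1/2$. The box calculation is valid only when the large-avalanche contribution dominates the variance, i.e.\ when the \emph{amplitude-size} distribution has infinite variance, namely the condition $\tau'<3$ for the modified exponents. Using Equation~\eqref{eq:5} with $\alpha'=\alpha$ and $\beta'=\beta/2$, I would solve for $\tau' = (\beta + 2\alpha)/(\beta + 2)$ and verify that $\tau'<3 \iff \alpha<\beta+3$, matching the stated boundary $\alpha \leq \beta+3$. Outside this regime the amplitude-size variance is finite, the large avalanches are subdominant, consecutive windows decorrelate, and $H^\omega_{amp}=1/2$; for $\alpha<2$ the same trivial value follows directly from Proposition~\ref{prop:4}, where $X(t)$ is Gaussian and behaves like filtered white noise.

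The main obstacle will be justifying the dichotomy at the threshold rather than the interior scaling, which is a mechanical substitution. One must argue that when $\tau'>3$ the aggregated envelope inherits short-range rather than power-law correlations, so that the DFA slope collapses to $1/2$; this requires controlling the competition between the now variance-dominant small avalanches and the heavy but rare large ones, and checking that the box approximation does not spuriously manufacture long-range correlations. A secondary subtlety is that $g_\omega(a)$ is an envelope rather than the raw shape $a$, so the existence and power-law inheritance of the amplitude-size integral $\int_0^L L^{\beta/2} g_\omega(a)(t/L)\,dt \sim L^{\beta/2+1}$ must be verified before Equation~\eqref{eq:5} can be applied with $\beta'=\beta/2$.
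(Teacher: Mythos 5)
Your proposal is correct and follows essentially the same route as the paper's proof: decompose $X(t)$ into non-overlapping large avalanches and small ones, apply Proposition~\ref{prop:7} to replace $\beta$ by $\beta/2$ in the narrowband representation, determine dominance of the large-avalanche term via the condition $\tau'<3 \iff \alpha<\beta+3$ from Equation~\eqref{eq:5}, and then rerun the box-function autocorrelation computation of Proposition~\ref{prop:6} to read off $H^\omega_{amp}=\beta/2-\alpha/2+2$, with $H^\omega_{amp}=1/2$ otherwise. Your explicit verification that $\tau'=(\beta+2\alpha)/(\beta+2)$ yields the stated boundary is a detail the paper asserts without computation, but it is the same argument.
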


\begin{proof}
$X(t)$ may be divided into a sum of long avalanches $L_{s,i}>L'$ which do not overlap and short avalanches.
\begin{align}
\label{eq:7}
X(t) =  \sum_{L_{s,i}>L'} L_{s,i}^\beta a_{s,i}\left(\frac{t-s}{L_{s,i}}\right) +  \sum_{L_{s,i}\leq L'} L_{s,i}^\beta a_{s,i}\left(\frac{t-s}{L_{s,i}}\right) 
\end{align}
Therefore by Proposition~\ref{prop:7}:
\begin{align}
\label{eq:8}
f_\omega(X(t)) =  \sum_{L_{s,i}>L'} L_{s,i}^{\beta/2} f_\omega \left(a_{s,i}\left(\frac{t-s}{L_{s,i}}\right)\right) +  
\sum_{L_{s,i}\leq L'} L_{s,i}^{\beta/2} f_\omega\left(a_{s,i}\left(\frac{t-s}{L_{s,i}}\right)\right) 
\end{align}
The left hand term dominates whenever its variance is unbounded for large $L_c$. This 
happens if $\tau'<3$ which translates to $\alpha<\beta+3$ by Equation~\eqref{eq:5}.
Therefore the right hand term may be neglected and since the avalanches of the left hand term are 
separated in time, the envelope operator may be pulled under the sum so that:
\begin{align}
\label{eq:9}
g_\omega(X(t)) \sim  \sum_{L_{s,i}>L} L_{s,i}^{\beta/2} g_\omega \left(a_{s,i}\left(\frac{t-s}{L_{s,i}}\right)\right)
\end{align}
The same proof as for Proposition~\ref{prop:6} may then be applied to deriving $H_{amp} = \frac{\beta}{2} -\frac{\alpha}{2}$.
On the other hand if $\tau'>3$ then the right hand term dominates, for large $L'$ and has time-scale bounded by $L'$---thus in this
case $H_{amp} = 1/2$.
\end{proof}

We now study the DCCA correlation coefficients between amplitude envelopes.

\begin{Prop}
\label{prop:9}
Assuming that $\int_0^1 g_\omega(a(t)) dt$ converges, for $\alpha<\beta+3$ and $\alpha>2$, $\rho_{DCCA}(n,g_{\omega_1}(X(t)),g_{\omega_2}(X(t)) \rightarrow 1$
for $n \rightarrow \infty$. 
\end{Prop}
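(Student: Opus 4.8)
The plan is to show that, in the regime $\alpha>2$, $\alpha<\beta+3$, the integrated (aggregate) profiles of the two amplitude envelopes are asymptotically positive scalar multiples of one and the same step process, which forces their detrended cross-correlation to $1$. I would start from the representation obtained in the proof of Proposition~\ref{prop:8}: using Proposition~\ref{prop:5} (non-overlap of large avalanches) and Proposition~\ref{prop:7}, both envelopes are, to leading order in the variance, sums over the \emph{same} set of well-separated long avalanches $L_{s,i}>L'$,
$$g_{\omega_k}(X(t))\sim\sum_{L_{s,i}>L'}L_{s,i}^{\beta/2}\,g_{\omega_k}(a)\!\left(\frac{t-s}{L_{s,i}}\right),\qquad k=1,2,$$
in which the event times $s$, the lengths $L_{s,i}$ and the prefactor $L_{s,i}^{\beta/2}$ are common to both bands and only the shape $g_{\omega_k}(a)$ is band-dependent. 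Passing to the DFA/DCCA aggregates $x_k(t)=\sum_{i\le t}g_{\omega_k}(X(i))$, a single completed avalanche of length $L$ contributes to $x_k$ a net increment $L^{\beta/2}\sum_{i}g_{\omega_k}(a)((i-s)/L)\approx L^{\beta/2+1}C_k$, where $C_k=\int_0^1 g_{\omega_k}(a(u))\,du$ is finite by the hypothesis of the proposition and \emph{strictly positive} because $g_\omega(a)$ is a modulus.

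The crux, which I would make the centre of the argument, is that the two hypotheses play two distinct and complementary roles. First, $\alpha>2$ guarantees that the largest avalanche falling in a window of size $n$ has typical length of order $n^{1/(\alpha-1)}\ll n$, so that it is \emph{completed} inside the window and enters each aggregate as a pure step rather than as a partially-resolved shape. Second, $\alpha<\beta+3$ is exactly the statement that the step sizes $W\sim L^{\beta/2+1}$ have tail index $(\alpha-1)/(\beta/2+1)<2$, i.e.\ infinite variance; by the single-big-jump principle for such heavy-tailed sums, the detrended fluctuation of each $x_k$ over a window of size $n$ is asymptotically dominated by the single largest step in that window. This is the same infinite-variance condition $\tau'<3$ that, in Proposition~\ref{prop:8}, made the long avalanches dominate.

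Combining the two, I would argue that a generic window at scale $n$ is dominated by one completed avalanche, which deposits into $x_1$ and $x_2$ the \emph{same} geometric step shape, scaled by the positive constants $C_1$ and $C_2$ respectively. Since the detrending operator $P_d$ is linear, all three detrended quantities then share a common positive factor $G(n)$ (the detrended variance of the shared step, carrying the $L^{\beta+2}$ weight), so that $F^2_{Y_1,Y_2}(n)\sim C_1C_2\,G(n)$ and $F^2_{Y_k,Y_k}(n)\sim C_k^2\,G(n)$. Hence $\rho_{DCCA}(n)\to C_1C_2/(|C_1||C_2|)=1$, the sign being $+1$ precisely because the integrated envelopes $C_1,C_2$ are both positive; this is the exact analogue of Pearson correlation $1$ for two proportional signals.

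The main obstacle is to make the single-big-jump domination quantitative. I would need to bound, relative to the leading single-step contribution, (i) the small-avalanche background that was dropped in the representation above (controlled by the variance estimate behind Proposition~\ref{prop:8}), and (ii) the windows that happen to contain two or more comparable large steps, whose relative frequency must be shown to vanish as $n\to\infty$ under the index-$<2$ tail. A Cauchy--Schwarz bound on the resulting cross-remainder in $F^2_{Y_1,Y_2}(n)$ would then transfer these estimates to the numerator. The most delicate point is the order of limits: the conclusion $\rho_{DCCA}(n)\to1$ holds in the critical (large-system) regime $L_c\to\infty$, where the step sizes retain a genuine index-$<2$ tail; for a fixed finite cutoff there is a crossover to the white-noise behaviour ($\rho_{DCCA}\to0$) once $n\gg L_c$, so the limit $n\to\infty$ in the statement must be read within this scaling regime, consistently with the heavy-tail assumptions used throughout.
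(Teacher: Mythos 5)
Your proposal follows the same skeleton as the paper's proof: reduce both envelopes, via Propositions~\ref{prop:5} and~\ref{prop:7} and the decomposition~\eqref{eq:8}, to sums over the \emph{same} well-separated large avalanches; observe that a completed avalanche of length $L$ deposits into each aggregate a step of size $L^{\beta/2}$ times $\int_0^1 g_{\omega_k}(a)$; and conclude that the two aggregates are asymptotically proportional with positive constants, forcing $\rho_{DCCA}(n)\rightarrow 1$. Where you genuinely depart from the paper is in how the ``one dominating avalanche per window'' step is justified. The paper simply \emph{asserts} that for large $n$ each window is longer than the largest avalanche and contains only one avalanche --- an assumption that is internally strained, since with avalanches initiating at constant rate $q$ a window of length $n \gg L_c$ contains of order $qn$ of them. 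You instead derive domination by a single avalanche from the single-big-jump principle: under $\alpha<\beta+3$ the step sizes $L^{\beta/2+1}$ have tail index $(\alpha-1)/(\beta/2+1)<2$ (exactly the paper's $\tau'<3$), so the fluctuation in a window is governed by its largest step, and $\alpha>2$ ensures that step is a completed avalanche rather than a partially resolved one. You also make explicit two points the paper leaves implicit: the sign of the limit is $+1$ because the integrals $C_k$ are moduli and hence positive, and the limit $n\rightarrow\infty$ must be read inside the scaling regime $n \ll L_c \rightarrow \infty$, since for fixed cutoff there is a crossover to $\rho_{DCCA}\rightarrow 0$ once $n \gg L_c$. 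Your version costs more machinery (quantitative control of the small-avalanche background and of windows with two comparable large steps, which you correctly flag as the remaining work), but it buys a resolution of the order-of-limits tension that the paper's two-line argument glosses over; the paper's version is shorter but rests on an idealization that is false outside that scaling regime.
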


\begin{proof}
For large $n$, each DCCA window is longer than the largest avalanche. 
For $\alpha<\beta+3$ the variance in the left hand term of Equation~\eqref{eq:8} may be assumed to be non-overlapping since we assume $\alpha>2$ (Propostion~\ref{prop:5}) we may neglect small avalanches and assume
that a window contains only one avalanche at $t'$ with length $L$. Then:
$$\sum_{i=1}^{t} g_{\omega_j}(X)(i) = \begin{cases} 0  &\text{ if $t<t'$} \\
 L^{\beta/2} \int_{0}^{L} {g_{\omega_j}}(a(t/L)) &\text{ if $t>t'+L$}
 \end{cases}$$
 But $\int_{0}^{L} {g_{\omega_1}}(a(t/L)) \sim \int_{0}^{L} {g_{\omega_2}}(a(t/L))$ up to a constant factor for large avalanches and assuming 
 the integral converges. Thus the correlation tends to 1.
\end{proof}

\begin{Prop}
\label{prop:10}
Assuming that $\int_0^1 a(t) dt$ converges, for $\tau<3$ and $\alpha>2$, \newline $\rho_{DCCA}(n,g_{\omega_1}(X(t)),g_{\omega_2}(X(t)) \rightarrow 1$
for $n \rightarrow \infty$ and $\omega_i \rightarrow 0$. 
\end{Prop}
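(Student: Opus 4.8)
The plan is to mirror the argument of Proposition~\ref{prop:9}, but to replace the order-one-frequency scaling of Proposition~\ref{prop:7} (where $\beta' = \beta/2$) with the low-frequency, delta-function regime that defines {\bf MU3}. The conceptual crux is this: when $\omega \to 0$ with $1/\omega \gg L_c \geq L$, the filter cannot resolve the internal shape of an avalanche of length $L$, so it responds only to the avalanche's total mass $\int_0^L L^\beta a(t/L)\,dt \sim L^{\beta+1}\int_0^1 a(u)\,du$, which weights all frequencies in the narrow passband equally. This is exactly why the hypothesis here is that $\int_0^1 a(t)\,dt$ converges rather than $\int_0^1 g_\omega(a(t))\,dt$ as in Proposition~\ref{prop:9}: in the delta-function limit the per-avalanche contribution to the envelope is governed by the raw integrated mass, not by the integral of the already-filtered shape.

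First I would fix a threshold $L'$ and retain only avalanches with $L_{s,i} > L'$. By Proposition~\ref{prop:5}, for $\alpha > 2$ these large avalanches occupy a vanishing fraction of time, so they do not overlap in raw time; and because $\tau < 3$ is equivalent to the size $S \sim L^{\beta+1}$ having a distribution of infinite variance, these large avalanches dominate the amplitude variance in the low-frequency limit, so that the short avalanches may be neglected exactly as in Proposition~\ref{prop:8}. Next I would establish the {\bf MU3} scaling $g_\omega\left(L^\beta a(\cdot/L)\right) \approx M\, g_\omega(\delta)$, with $M = L^{\beta+1}\int_0^1 a(u)\,du > 0$ the avalanche mass and $g_\omega(\delta)$ the envelope of the filter's impulse response, so that the time-integral of the envelope contributed by a single large avalanche factorises as $M\,c(\omega)$, where $c(\omega) = \int g_\omega(\delta)(t)\,dt > 0$ depends on $\omega$ but not on the avalanche.

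I would then evaluate the DCCA quantities over windows of size $n$ with $n \gg 1/\omega \gg L_c$, so that each window's aggregate sum of $g_{\omega_j}(X)$ captures the full integrated response of the large avalanche it contains. Writing $M_k$ for the mass of the dominant avalanche in window $k$, the detrended cumulative-sum fluctuation of channel $j$ in that window is $c(\omega_j)\,M_k$ times a detrending shape factor common to both channels, so $F^2_{g_{\omega_1},g_{\omega_2}}(n) \propto c(\omega_1)c(\omega_2)\sum_k M_k^2$ while $F^2_{g_{\omega_j},g_{\omega_j}}(n) \propto c(\omega_j)^2 \sum_k M_k^2$. The frequency-dependent constants and the shared mass spectrum $\sum_k M_k^2$ cancel in the ratio, giving $\rho_{DCCA}(n, g_{\omega_1}(X), g_{\omega_2}(X)) \to 1$; since the detrending acts identically on the two channels, it does not disturb this cancellation.

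The main obstacle will be the nonlinearity of the envelope operator: because $g_\omega = |\cdot + i\mathcal{H}(\cdot)|$, one has $g_\omega(\sum_i y_i) \neq \sum_i g_\omega(y_i)$ whenever the filtered responses $f_\omega(y_i)$ overlap, and at low frequency the temporal support $\sim 1/\omega$ of each response is by construction much larger than $L_c$, so generic small avalanches are dense at this scale. The resolution, and the step that needs the most care, is to verify that the threshold $L'$ can be chosen in the joint limit so that consecutive large-avalanche responses do not overlap: large avalanches arrive at rate $\sim q\,L'^{1-\alpha}$, so their spacing $\sim L'^{\alpha-1}/q$ exceeds the response width $1/\omega$ precisely when $\alpha > 2$ makes them sufficiently sparse, and one must check that the window $L_c \ll 1/\omega \ll L'^{\alpha-1}/q$ with $L' < L_c$ remains non-empty as $L_c \to \infty$ and $\omega \to 0$. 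Granting this separation, the envelope of the retained sum equals the sum of the individual envelopes and the factorisation above goes through.
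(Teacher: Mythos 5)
Your proposal is correct and follows essentially the same route as the paper's proof of Proposition~\ref{prop:10}: the low-frequency delta-function approximation (so the per-avalanche filter response is governed by the integrated mass, i.e.\ the scaling exponent becomes $\beta$ rather than $\beta/2$), dominance of the large-avalanche term for $\tau<3$ via infinite variance of the size distribution, and the non-overlap plus window/factorisation argument of Propositions~\ref{prop:5} and~\ref{prop:9}. If anything you are more careful than the paper, which never addresses the overlap of filter responses of width $1/\omega \gg L_c$ and simply asserts that the techniques of Propositions~\ref{prop:8} and~\ref{prop:9} carry over once the $L^{\beta/2}$ scaling is replaced by $L^{\beta}$.
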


\begin{proof}
For $\omega_1,\omega_2 > L_c$ then we have that $g_{\omega_j}(L^\beta a(t/L)) \sim L^\beta g_{\omega_j}(L^\beta a(t/L))$. This
is because, $a(t/L)$ may be approximated by a delta function relative to these frequencies, which has white spectrum. Then 
the same proof techniques of Propositions~\ref{prop:8} and~\ref{prop:9} may be applied but with Equation~\eqref{eq:8} replaced by:

\begin{align}
\label{eq:10}
f_\omega(X(t)) =  \sum_{L_{s,i}>L} L_{s,i}^{\beta} f_\omega \left(a_{s,i}\left(\frac{t-s}{L_{s,i}}\right)\right) +  
\sum_{L_{s,i}\leq L} L_{s,i}^{\beta} f_\omega\left(a_{s,i}\left(\frac{t-s}{L_{s,i}}\right)\right) 
\end{align}
For $\tau>3$, the left hand term dominates, since it has unbounded variance in $L_c$, which completes the proof.
\end{proof}

\begin{Prop}
\label{prop:11}
For $\alpha<2$, $\mathbb{E}\big(\rho_{DCCA}\big(n,g_{\omega_1}(X(t)),g_{\omega_2}(X(t))\big)\big) \rightarrow 0$
for $n \rightarrow \infty$, $H^\omega_{amp} = 1/2$ and $H^\omega_{raw}$.
\end{Prop}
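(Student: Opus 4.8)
The plan is to derive all three conclusions by reduction to propositions already established in this regime, since $\alpha < 2$ is precisely the hypothesis treated in Proposition~\ref{prop:4}, and the raw exponent is governed by Proposition~\ref{prop:6}. (I read the symbol $H^\omega_{raw}$ in the statement as the raw Hurst exponent $H_{raw}$ of $X(t)$; no narrowband filtering enters the raw quantity, so the $\omega$ superscript should be dropped.)

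First I would dispatch the two amplitude conclusions, $\mathbb{E}(\rho_{DCCA}(n, g_{\omega_1}(X(t)), g_{\omega_2}(X(t)))) \to 0$ and $H^\omega_{amp} = 1/2$, by invoking Proposition~\ref{prop:4} verbatim: its hypothesis is exactly $\alpha < 2$ and its conclusion is exactly these two statements. For completeness I would recall the chain it rests on. Proposition~\ref{prop:2} gives that the number of simultaneously active avalanches diverges like $L_c^{2-\alpha}$; Proposition~\ref{prop:3} upgrades this to asymptotic Gaussianity of $X(t)$; and a Gaussian process is determined by its covariance and is therefore realisable as a long-range linear filtering of white noise. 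This places us in the PF setting of Proposition~\ref{prop:1}, where narrowband amplitudes separated by more than the filter scale decorrelate, forcing $H^\omega_{amp} = 1/2$, and where the DCCA numerator $F^2_{g_{\omega_1}, g_{\omega_2}}(n)$ inherits white-noise behaviour and so has expectation tending to $0$.

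It remains to fix $H_{raw}$, and here I would appeal directly to Proposition~\ref{prop:6}, which yields $H_{raw} = \beta - \alpha/2 + 2$ when $\tau < 3$ and $H_{raw} = 1/2$ otherwise. The single point to verify is that the present regime lies in the first branch. Rewriting the crackling-noise relation~\eqref{eq:4} as $\tau - 1 = (\alpha - 1)/(\beta + 1)$, the condition $\tau < 3$ is equivalent to $\alpha < 2\beta + 3$; since $\alpha < 2$ and $\beta \ge 0$ we have $2\beta + 3 \ge 3 > 2 > \alpha$, so $\tau < 3$ holds automatically and $H_{raw} = \beta - \alpha/2 + 2 > 1/2$.

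The main point worth foregrounding in the write-up is conceptual rather than computational: the same overlap of unboundedly many avalanches that renders $X(t)$ Gaussian (and thereby kills the narrowband-amplitude and cross-amplitude dependence, giving $H^\omega_{amp} = 1/2$ and $\rho_{DCCA} \to 0$) leaves the power-law distribution of durations imprinted on the second-order structure of $X(t)$, so that the raw signal remains genuinely long-range correlated with $H_{raw} > 1/2$. There is no delicate estimate to carry out; the only technical obstacle is the regime check $\alpha < 2 \Rightarrow \tau < 3$, and everything else is inherited directly from Propositions~\ref{prop:1},~\ref{prop:2},~\ref{prop:3},~\ref{prop:4} and~\ref{prop:6}.
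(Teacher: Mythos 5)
Your proposal follows exactly the paper's route: the paper's own proof is a one-line reduction to Propositions~\ref{prop:1} and~\ref{prop:4} for the amplitude conclusions and to Proposition~\ref{prop:6} for $H_{raw}$, which is precisely your structure. Your additions --- reading the garbled $H^\omega_{raw}$ clause as $H_{raw} > 1/2$ (consistent with the \textbf{MU1} summary) and explicitly verifying via Equation~\eqref{eq:4} that $\alpha < 2$ forces $\tau < 3$, so the first branch of Proposition~\ref{prop:6} applies --- are correct details the paper leaves implicit.
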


\begin{proof}
The proof is identical to the proofs of Propositions~\ref{prop:1} and~\ref{prop:4}, using the results of Proposition~\ref{prop:6}.
\end{proof}

\subsection{Summary of Results}
\label{sec:Summary of Results}

We term each region of parameters a \emph{meta-universality class} which 
gives qualitatively constant behaviour as quantified by Hurst exponents and DCCA correlation coefficients. There are four meta-universality classes {\bf MU1--MU4}:

\begin{itemize}
\item[\bf MU1:] 
$\alpha<2$:
\begin{align*}
&H_{raw} > 0.5  \\
 &H_{amp}^\omega = 0.5 \\
  &\rho_{DCCA}(n, g_{\omega_1}(X(t)), g_{\omega_2}(X(t))) \rightarrow 0
\end{align*}
\item[\bf MU2:] $\alpha>2$ and $\alpha \leq \beta+3$
\begin{align*}
&H_{raw} > 0.5 \\ 
&H_{amp}^\omega > 0.5 \\
&\rho_{DCCA}(n, g_{\omega_1}(X(t)), g_{\omega_2}(X(t))) \rightarrow 1
\end{align*}
\item[\bf MU3:] $\alpha >2$ and $\tau<3$
\begin{align*}
& H_{raw} > 0.5 \\
& H_{amp}^\omega = 0.5 \\
& \rho_{DCCA}(n, g_{\omega_1}(X(t)), g_{\omega_2}(X(t))) \rightarrow 1 \text{ as $\omega_i \rightarrow 0$}
\end{align*}
\item [\bf MU4:] $\alpha >2$ and $\tau>3$
\begin{align*}
& H_{raw} = 0.5 \\
& H_{amp}^\omega = 0.5 \\
& \rho_{DCCA}(n, g_{\omega_1}(X(t)), g_{\omega_2}(X(t))) \rightarrow 0
\end{align*}

\end{itemize}

\section{Simulations}
\label{sec:Simulations}
In all simulations we model the average avalanche shape as a quadratic function: $b(t) = -4(t-1/2)^2+1$. 
Moreover  we set the variance swell identically so that $b(t) = c(t)$. 
For the noise component we take the implementation of \cite{kasdin1995discrete}.
For the power-law cutoff sampling, we perform a density transformation of the uniform distribution.
(In MATLAB {\tt x = rand(1,T).*(1-$L_c$)+$L_c$; x = 1./(x.\^~(1./($\alpha$-1)))})

\subsection{Simulation 1 (from main text)}
\label{sec:Simulation 1 (from main text)}

We describe here how we obtain the results of Figure~2 of the main text.

Lower panel: we filter white Noise with $T = 40000$ (using the method of \cite{kasdin1995discrete}) to yield a process with spectrum scaling according to $1/ \omega^{1.9}$. We then measure DCCA correlations between
amplitudes in the frequency ranges [0.68,0.72] and [0.78,0.82] of half the sampling frequency.
 with $n$ log-spaced between 20 and 9000 and Hurst exponents of the same amplitudes using DFA with window sizes between 1000 and 9000.
 
 Top two panels: we generate two processes assuming CPLAD applied to Equation (1) of the main paper, generating avalanches as described above. We set $q = 5$, $T = 40000$ $\beta = 1$ and $\alpha = 2.5, 1.5$ (top, middle resp.).
 We then apply the same analysis as in the lower panel.

\subsection{Simulation 2}
\label{sec:Simulation 2}

The aim of this simulation is to verify that long-range dependent Gaussian processes satisfy $H^\omega_{amp} = 0.5$ and $\rho_{DCCA} = 0$, where
the time series are generated as filtered Gaussian white noise processes. Thus this simulation tests our predictions for the PF model.

To this end we simulate 100 long-range dependent Gaussian processes ($H>0.5$) (using the method of \cite{kasdin1995discrete}) of length 15000 time points. In each case the data are then filtered forward and backwards in 
two separate frequency bands (between 0.39 and 0.41 of the sampling frequency and 0.29 and 0.31 of the sampling frequency)
with butterworth filters of order $n$. We then measure DCCA correlations between the Hilbert transforms of these signals and
measure their Hurst exponents with DFA, in both cases using window lengths between $10^3$ and $10^4$.

We repeat this setup for $H=0.9,1.4$ and $n=2,4$.

The results are displayed in Figure~\ref{fig:gaussian_processes} and show that, up to small sample effects, we expect $H_{amp}=0.5$ and zero cross correlations $\rho_{DCCA} = 0$ between frequency bands.

\begin{figure}
\begin{center}
$\begin{array}{c}
\includegraphics[width=130mm,clip=true,trim= 0mm 0mm 0mm 0mm]{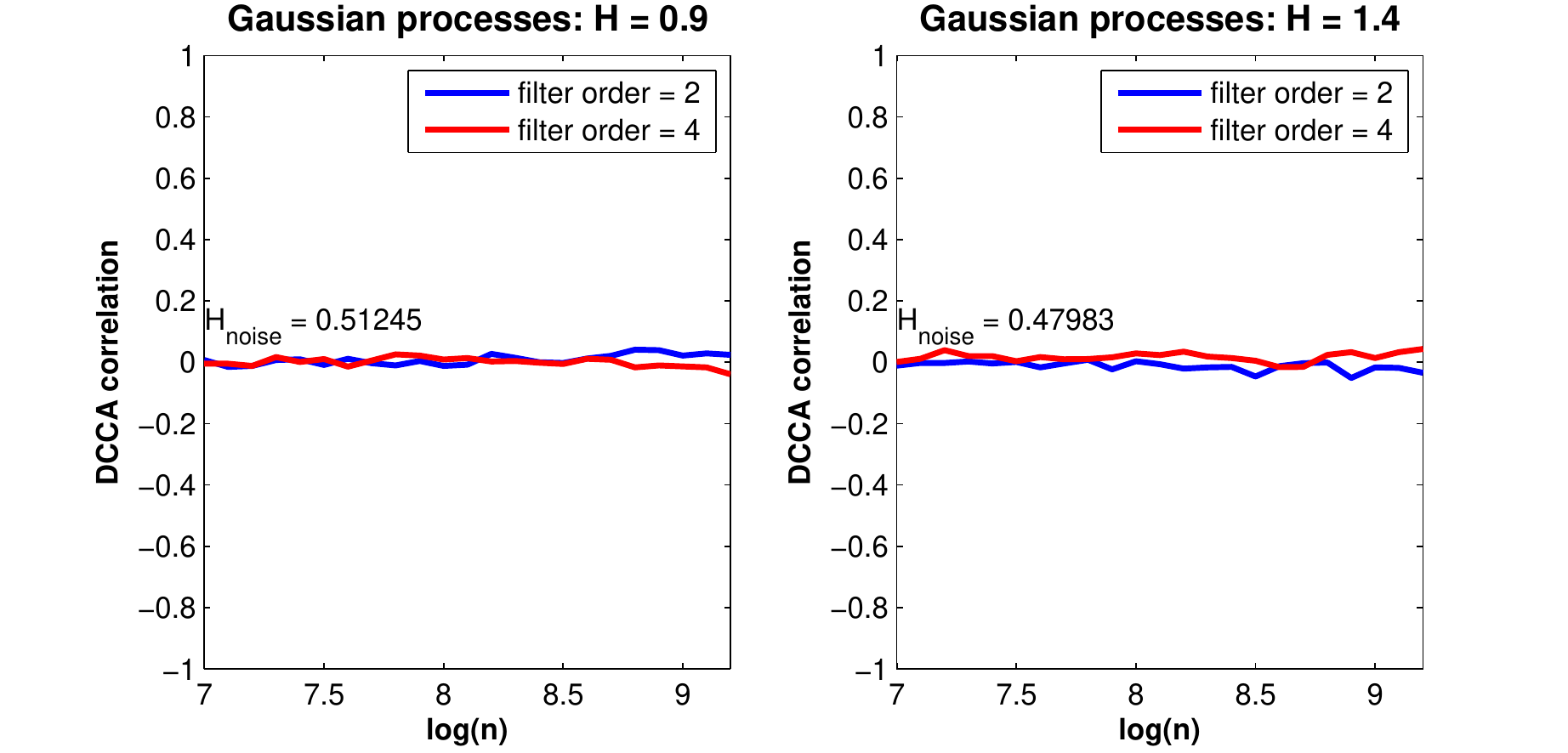}  
\end{array}$
\end{center}
\caption{The figure displays the results of the long-range analysis for a Gaussian process model.}
\label{fig:gaussian_processes}
\end{figure}

\subsection{Simulation 3 (from main text)}
\label{sec:Simulation 3 (from main text)}

In the first simulation, for each pair of exponents in the ranges $\alpha = 1.5,\dots, 6.5$ and $\beta = 0.25,0.5,\dots,3$,
we generate a sample path $X(t)$ of length $T = 300,000$, with a cutoff at $L_c = 100,000$, and number of superpositions $q=5$. The first $100,000$ time points are discarded, to ensure stationarity. 
We then design Butterworth filters of order 2 between $0.29$ and $0.31$ and between $0.39$ and $0.41$ of the sampling frequency.
The data from $X(t)$ are then filtered forwards and backwards (yielding effective filter order of 4) and the amplitude envelopes are calculated to yield $Y_1(t)$ and $Y_2(t)$.
We measure the Hurst exponent of $\mathcal{E}(X_{\omega_1}(t))$, using DFA, and the DCCA correlation coefficients between $\mathcal{E}(X_{\omega_1}(t))$ and $\mathcal{E}(X_{\omega_2}(t))$,
setting $n$ to log spaced values between $100$ and $200000$.
This setup is repeated 100 times, and the results of the simulations are averaged.

\subsection{Simulation 4}
\label{sec:Simulation 4}

In the second simulation we check the prediction of $H_{amp}$ and $H_{raw}$, setting 
set $\alpha = 2.5$, $q=1$, $L_c = 10^6$, with a burn in time of $10^5$ time points (less time is required for convergence for this value of $\alpha$), setting $n$ to log spaced
values between $7000$ and $60000$ for the estimation of $H_{amp}$ and between $100$ and $7000$ for the estimation of $H_{raw}$ (according to where scaling regions were observed).
The results are displayed in Figure~\ref{fig:cutoff_corr_theory},

The quality of the $H_{raw}$ estimate is greater for small $\beta$, whereas the quality of the $H_{amp}$ estimate is greater for larger $\beta$. This discrepancy may be explained as follows:
since $X(t)$ has longer tails than $\mathcal{E}(X_{\omega}(t))$, the convergence of its moments is slow for large $\beta$, thus the quality of the estimate decreases for large $\beta$.
On the other hand, the estimate of $H_{raw}$ requires a linear approximation to the non-linear transform given by the amplitude of the analytic signal: this 
approximation increases in quality for larger $\beta$.

\begin{figure}
\begin{center}
\includegraphics[width=80mm,clip=true,trim= 0mm 0mm 0mm 0mm]{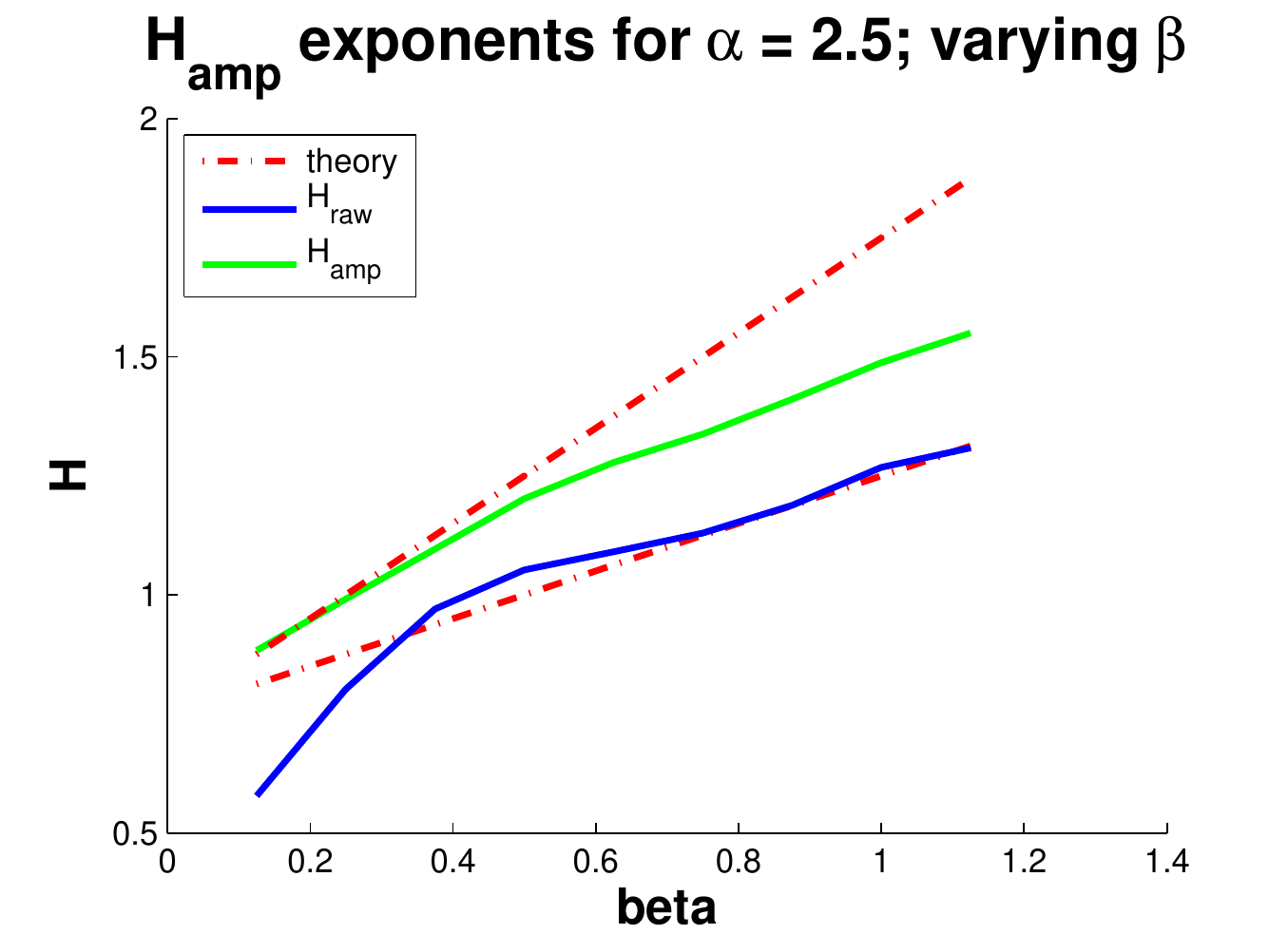}  
\end{center}
\caption{Scaling of $H_{amp}$ and $H_{raw}$ (second simulation) compared to theory. The theoretical estimate is $H_{amp} \sim 2-\alpha/2 +\beta'$ for $\beta'=1$.}
\label{fig:cutoff_corr_theory}
\end{figure}

\subsection{Simulation 5}
\label{sec:Simulation 5}

The aim of this simulation is to verify the theory of Proposition~\ref{prop:7} checking the heights of the filtered avalanches $a_\omega(t)$. We set $\beta = 0.25$, $L = 2^{10},\dots,2^{13}$ and for each $L$ considered, we simulate $10^4$ avalanches of this length, and calculate the mean avalanche profile. We then log-regress the height of these profiles against $\text{log}(L)$.
The results are displayed in Figure~\ref{fig:filtered_avalanches}. Close agreement is observed between the theoretical estimate $\beta' = \beta/2$
and the simulated results; the prediction improves for higher $\beta$.

\begin{figure}
\begin{center}
\includegraphics[width=80mm,clip=true,trim= 0mm 0mm 0mm 0mm]{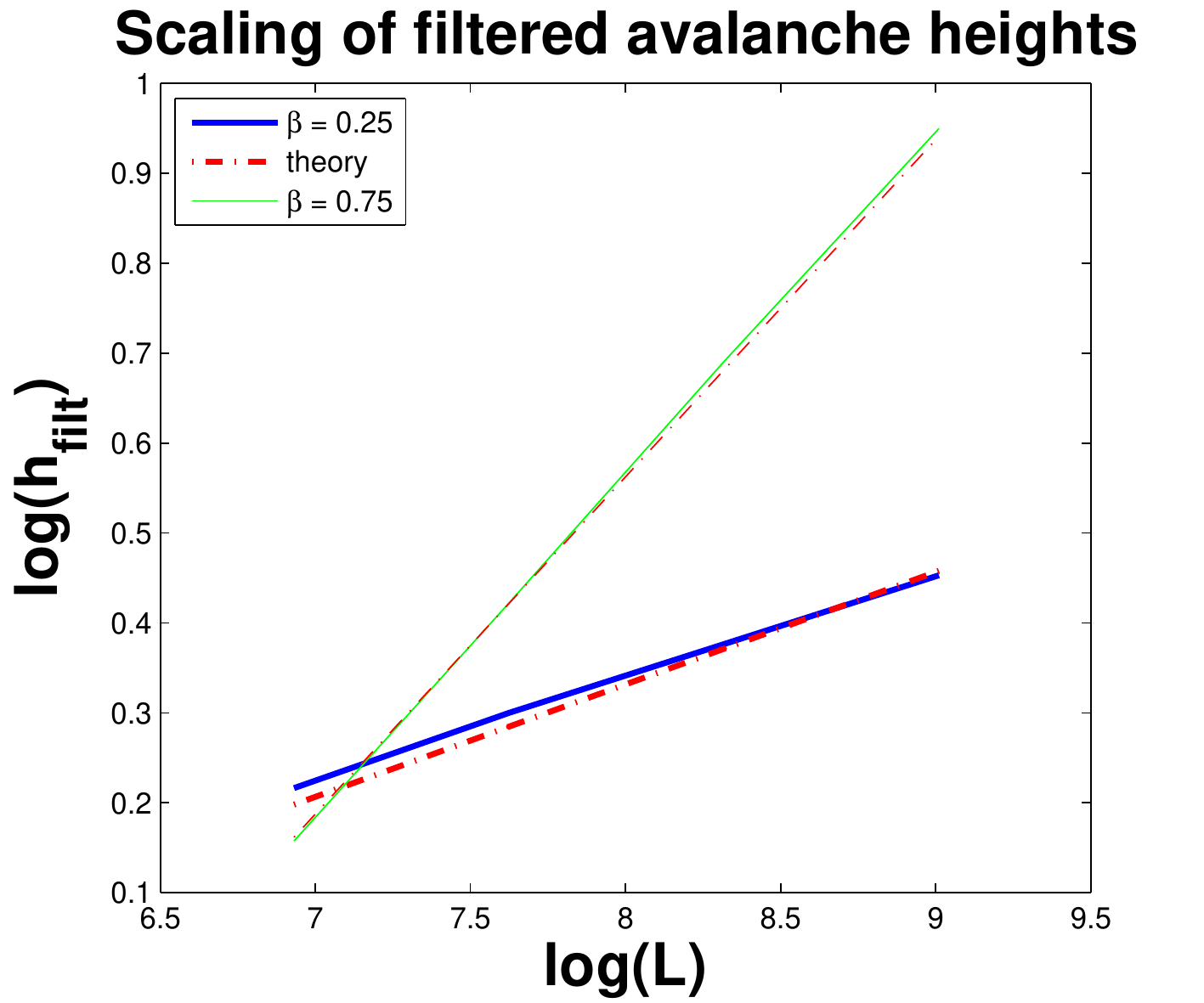}  
\end{center}
\caption{Comparison of the estimate $\beta/2$ for the scaling of heights of filtered avalanches vs. simulation. }
\label{fig:filtered_avalanches}
\end{figure}

\section{Data Analysis}
\label{sec:Data Analysis}

Seven subjects participated in the study (1 female). The experimental protocol was approved by the Institutional Review Board of the Charit{\'e} Medical University, Berlin.
EEG recordings were obtained at rest with subjects seated comfortably in a chair with their eyes open. Recordings were made of three sessions, each 5 minutes long so that each data set comprises roughly 15 minutes of data. EEG data were recorded with 96 Ag/AgCl electrodes, using BrainAmp amplifiers and BrainVision Recorder software (Brain Products GmbH, Munich, Germany). The signals were recorded in the 0.016--250 Hz frequency range at a 1000Hz sampling frequency and subsequently 
subsampled to 200Hz. 

The data analytic steps taken on the EEG data were as follows.
Outlier channels were rejected after visual inspection for abrupt shifts in voltage and poor signal quality. The data were then re-referenced according to the common average.
Spatial filters were computed on the data using Spatio-Spectral Decomposition (SSD) \cite{nikulin2011novel}, in order to extract components with pronounced alpha oscillations. Spatial filters with poor signal quality or topographies were rejected. We 
then restricted our analysis to those filters displaying a peak in the alpha range; this step ensured a high signal quality with low levels of artifactual activity.
The fact that the spatial filters yield clear oscillatory signals ensured that the neuronal processes in the adjacent frequency ranges similarly originated from cortical areas relating to neuronal rather
than artifactual activity. For DFA and DCCA estimation we set $n$ to log-spaced values between $1000$ and $25000$.

\subsection{Further Analyses}
\label{sec:Further Analyses}

In this section we repeat the analysis of real data reported in the main paper, but with a differing choice of spatial filter. Instead of taking SSD filters, which use the signal-to-noise ration in the alpha range 
to obtain filters corresponding to neuronal activity, we take Laplacian spatial filters \cite{srinivasan1996spatial}, which reduce redundancy between the data recorded at each electrode.
All other parameters remain fixed. In this way, we show that the results we obtain do not depend on the alpha rhythm.

The results are displayed in Figure~\ref{fig:real_data_prl_laplace} and confirm that the results obtained on real data using the SSD filters calibrated to the alpha rhythm may be obtained 
using filters which are independent of the alpha rhythm.

To further check that results do not depend on the detrending degree $d$, we repeat this analysis with $d=2$. 
The results are displayed in Figure~\ref{fig:real_data_prl_laplace_degree_2} and show qualitatively identical trends; most notable is that the results
are less noisy for higher degree detrending, which we conjecture is due to a more thorough removal of trends for quadratic detrending.

We also compute power-spectra of the first SSD component for each of the 7 subjects, using Welch's method (Hanning window of length 2 seconds, with 0 overlap between windows and calculated at 256 points).
The spectra are displayed in Figure~\ref{fig:psd}; each spectrum clearly displays the presence of alpha/ mu range oscillations at around 10Hz.

\begin{figure}
\begin{center}
\includegraphics[width = 150mm]{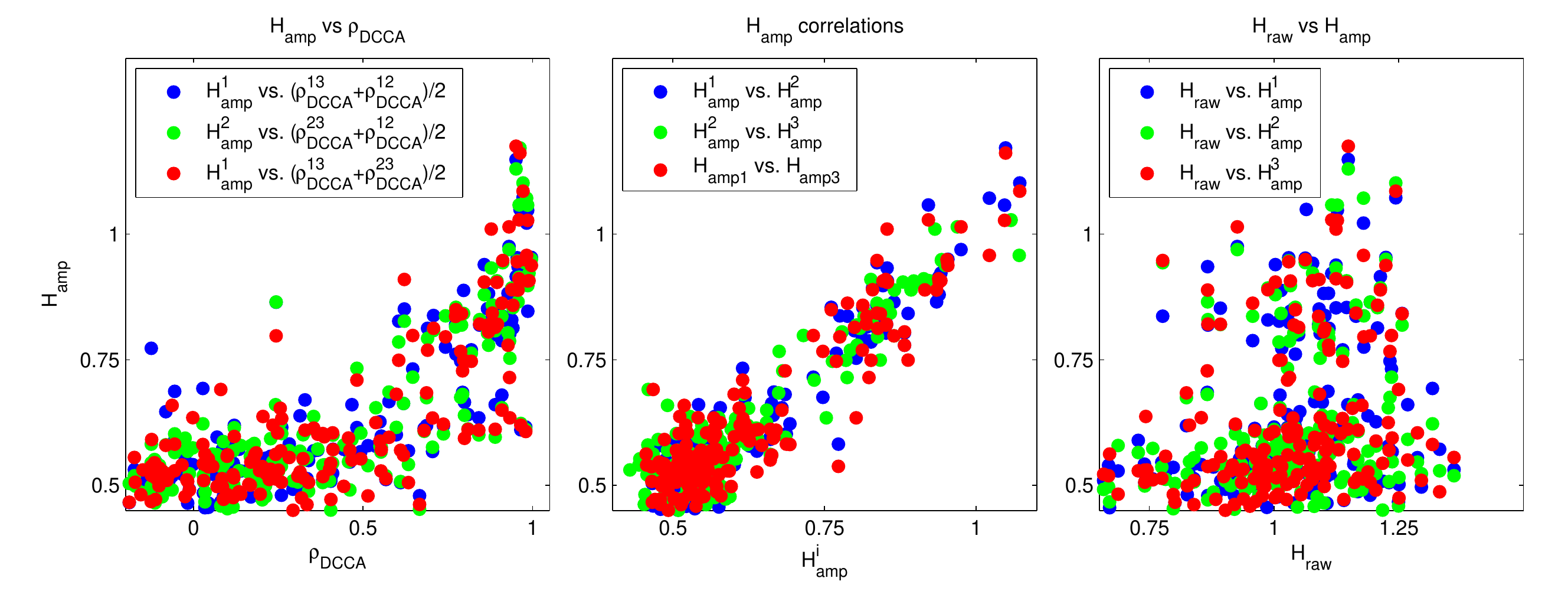}
\end{center}
\caption{The figure displays results in an identical analysis as carried out for Figure~4 of the main paper, with 
the only difference being that we use Laplacian rather then SSD spatial filters.}
\label{fig:real_data_prl_laplace}
\end{figure}

\begin{figure}
\begin{center}
\includegraphics[width = 130mm]{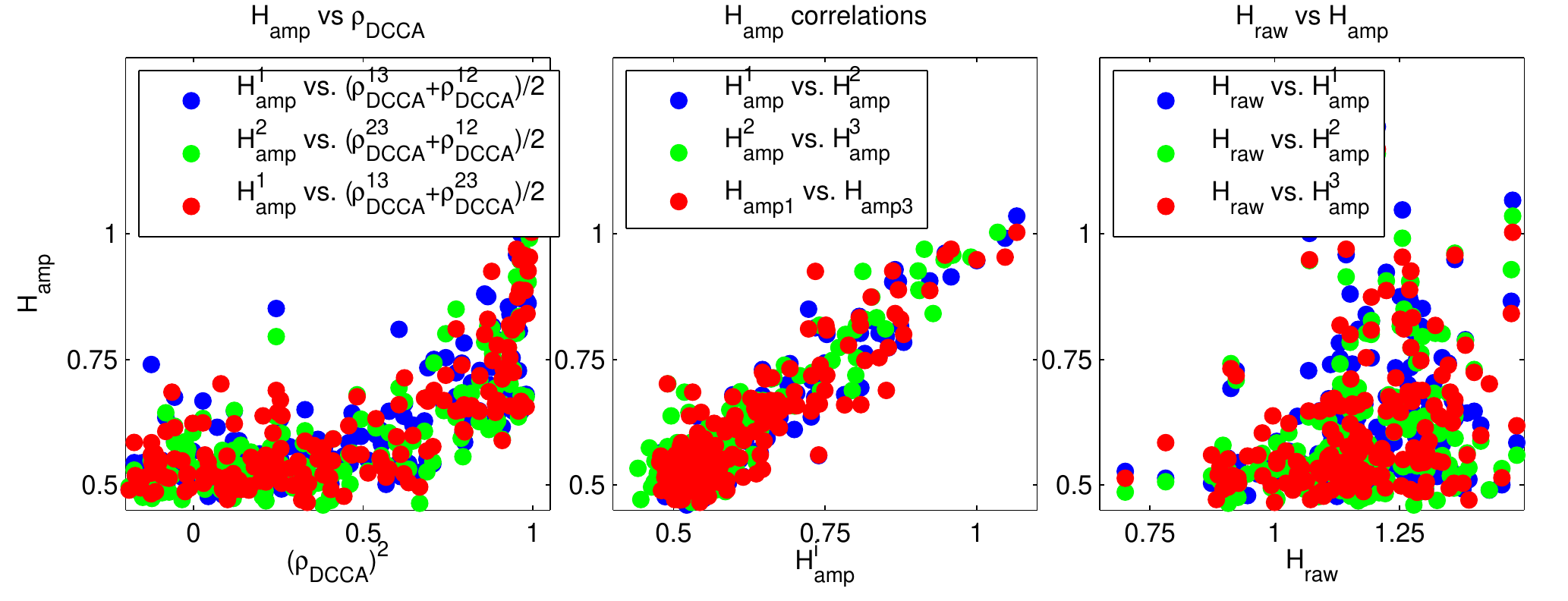}
\end{center}
\caption{The figure displays results in an identical analysis as carried out for Figure~4 of the main paper, with 
the only difference being that we use Laplacian rather then SSD spatial filters.}
\label{fig:real_data_prl_laplace_degree_2}
\end{figure}

\begin{figure}
\begin{center}
\includegraphics[width = 150mm]{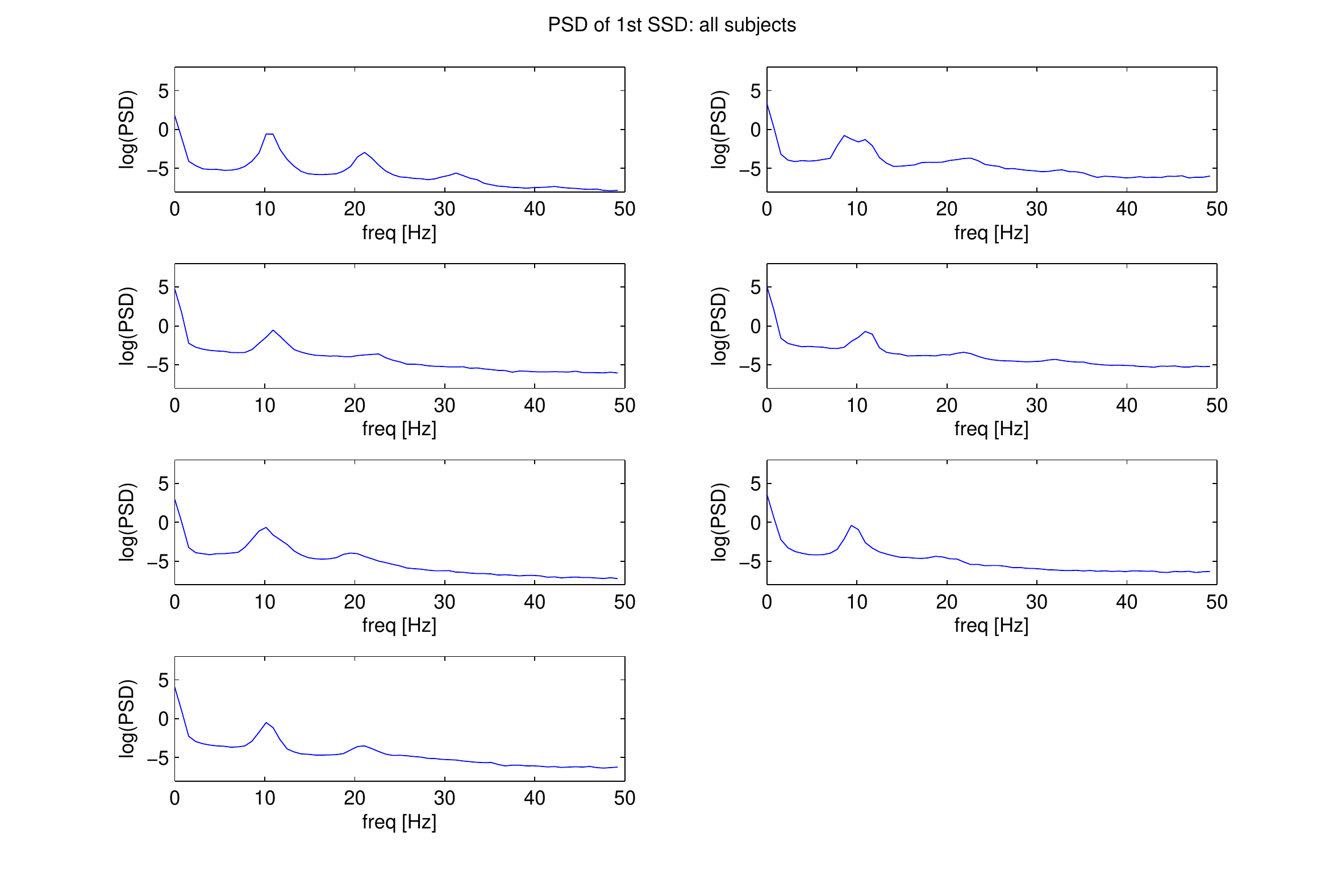}
\end{center}
\caption{The figure displays power-spectra from each of the 7 subjects considered in the study. 
In case the power-spectrum is estimated on the first SSD component using Welch's method.}
\label{fig:psd}
\end{figure}

\end{appendix}

\end{document}